\renewcommand\footnotetextcopyrightpermission[1]{} 
\newenvironment{customlegend}[1][]{%
	\begingroup
	\csname pgfplots@init@cleared@structures\endcsname
	\pgfplotsset{#1}%
}{%
	\csname pgfplots@createlegend\endcsname
	\endgroup
}%
\def\addlegendimage{\csname pgfplots@addlegendimage\endcsname}
\DeclareMathOperator*{\argmax}{argmax}
\DeclareMathOperator*{\argmin}{argmin}
\newtheorem{theorem}{Theorem}[section]
\newtheorem{lemma}{Lemma}
\newtheorem{definition}{Definition}
\newtheorem{proposition}[theorem]{Proposition}
\newtheorem{claim}{Claim}
\newcommand{\gourab}[1]{\textcolor{blue}{GP:#1}}
\begin{document}
	
\clubpenalty = 10000
\widowpenalty = 10000

\setlength{\belowdisplayskip}{1pt} 
\setlength{\belowdisplayshortskip}{1pt}
\setlength{\abovedisplayskip}{1pt} 
\setlength{\abovedisplayshortskip}{1pt}

\title{On Fair Virtual Conference Scheduling: \\ Achieving Equitable Participant and Speaker Satisfaction}

\author{Gourab K Patro}
\affiliation{Indian Institute of Technology Kharagpur, India}

\author{Abhijnan Chakraborty}
\affiliation{Max-Planck Institute for Software Systems, Germany}

\author{Niloy Ganguly}
\affiliation{Indian Institute of Technology Kharagpur, India}

\author{Krishna P. Gummadi}
\affiliation{Max-Planck Institute for Software Systems, Germany}

\renewcommand{\shortauthors}{Patro et al.}
	
\begin{abstract}
The (COVID-19) pandemic-induced restrictions on travel and social gatherings have prompted most conference organizers to move their events online. However, in contrast to physical conferences, virtual conferences face a challenge in efficiently scheduling talks, accounting for the availability of participants from different timezones as well as their interests in attending different talks. In such settings, a natural objective for the conference organizers would be to maximize some global welfare measure, such as the total expected audience participation across all talks. However, we show that optimizing for global welfare could result in a schedule that is unfair to the stakeholders, i.e., the individual utilities for participants and speakers can be highly unequal. To address 
the fairness concerns, we formally define fairness notions for participants and speakers, and subsequently derive suitable fairness objectives for them. We show that the welfare and fairness objectives can be in conflict with each other, and  there is a need to maintain a balance between these objective while 
caring for them simultaneously. Thus, we propose a joint optimization framework that allows conference organizers to design talk schedules that balance (i.e., allow trade-offs) between global welfare, participant fairness and the speaker fairness objectives. We show that the optimization problem can be solved using integer linear programming, and 
empirically evaluate the necessity and benefits of such joint optimization approach in virtual conference scheduling.
\end{abstract}
	
\maketitle

\section{Introduction}
Due to the restrictions on travel and social gatherings to tackle the COVID-19 pandemic, most of the conferences have moved online and 
some of them may remain online in the years to come. 
Online conferences are not only hugely economical---due to the reduction in organization and travel costs, thus enabling participation from resource/budget constrained regions---they are also more environmentally sustainable than their physical in-person counterparts (by reducing the carbon footprint from long-distance air travels, huge power and non-renewable consumption at event venues, etc.). 
In addition, they also provide 
a unique opportunity to significantly improve the scale and outreach of the conferences, along with 
focused discussions through interactive tools like live messaging \cite{oxford2020conf}.
However, online conferences come with their own set of challenges. For example, scaling up 
participation is subject to the availability of stable and high-speed Internet in different regions;
time consuming training of participants and speakers on different interactive conferencing tools is essential for efficient participation \cite{saliba2020getting}.
Another big challenge in organizing online conferences is {\it optimal scheduling of the conference talks};
this is because, online conferences usually have participants from different timezones all around the globe unlike the physical conference setup where participants assemble at a single place to participate in the conference.
Thus, traditional timezone-specific conference schedules---usually followed in physical conferences based on the timezone of the venue--- is  no longer convenient for conferences being held online, as the participants from the other distant parts of the globe will find it hard to attend.
This demands for conference schedules which are not timezone-specific but timezone-aware, and may stretch beyond the usual 7-8 hours of a day, 
to cater to the participants from different timezones.
In this paper, we focus on this conference scheduling problem and relevant concerns of efficiency and fairness.
A natural objective for conference scheduling would be to maximize some social welfare measure like the total expected audience participation across all talks (formally defined in \cref{subsec:tep}).
However, optimizing for such a social welfare objective could result in a schedule that is unfair to the stakeholders (as illustrated in \cref{subsec:tensions}), i.e., the level of satisfaction enjoyed by individual participants (formally defined in \cref{subsec:participant_satis}) can be very different, and the expected exposure 
(audience size) at different talks can be disproportionately 
skewed ---~leading to disparity in speaker satisfactions (formally defined in \cref{subsec:speaker_satis}).
Intuitively, a participant would be less satisfied if her favorite talks are scheduled in timeslots that are unfavorable for her, and similarly a speaker would be less satisfied if her talk is scheduled in a timeslot which adversely affects her deserved exposure (expected audience or crowd).
Thus, in conference scheduling, fairness for participants and speakers are also desired along with social welfare.

We formally define the problem setup alongside suitable measures of participant satisfaction, speaker satisfaction, and social welfare in \cref{sec:prelim}. 
Intuitively, stakeholder fairness would be in bringing parity to their normalized satisfactions (parity in individual participant satisfactions and parity in individual speaker satisfactions).
However, as absolute parity is often hard to achieve in discrete real-world problems, we propose suitable relaxed fairness notions for participants and speakers in \cref{subsec:p_fairness,subsec:s_fairness} respectively.
Subsequently, we reduce these fairness notions to fairness objectives through suitable unfairness measures.
Both fairness objectives along with welfare objective are important in conference scheduling.
However, it may  be impossible to optimize all three simultaneously as there are some fundamental tensions among these objectives (more details in \cref{subsec:tensions});
optimizing for one objective could cause losses in other objectives.

We propose a joint optimization framework (\cref{subsec:joint_opt}) that allows conference organizers to design schedules that balance (i.e., allow trade-offs) between the global welfare, the participant fairness and the speaker fairness objectives.
We show that the joint optimization problem can be solved using integer linear programming, and we also empirically evaluate the necessity and benefits of such joint optimization approach in virtual conference scheduling along with the analysis on the pitfalls of baseline approaches (\cref{sec:experiments}).
Our focus, in this paper, is more towards bringing out the fundamental tensions, trade-offs, and difficulties involved in online conference scheduling.
With this work, we begin to lay the foundation towards more focused research into---very timely and important problem---efficient and fair online conference scheduling, and hope to motivate a new line of work---both of theoretical and empirical interests---on such multi-stakeholder scheduling settings.
In this regard, we also provide a detailed description on possible future works (\cref{sec:discussion}) to consider many new nuances observed in online conferences.
In summary, we make the following contributions.
\begin{itemize}
	\item We formally define the problem of online conference scheduling (\cref{sec:prelim}), and the notions of social welfare (\cref{subsec:tep}), participant fairness (\cref{subsec:p_fairness}) and speaker fairness (\cref{subsec:s_fairness}). 
	Through suitable unfairness measures, we reduce them to fairness objectives.
	To our knowledge, we are the first to do so.
	\item We illustrate the some fundamental tensions, trade-offs, and possibility of conflicts among the two fairness objectives and welfare objective (\cref{subsec:tensions}).
	\item We propose a joint optimization problem to suitably balance these objectives, and empirically illustrate the difficulties involved in the problem and the benefits of our approach (\cref{subsec:joint_opt,sec:experiments}). We also detail other nuances involved in online conference and possible future works (\cref{sec:discussion}).
\end{itemize}
\if 0
\gourab{\begin{itemize}
		\item COVID-19 lockdowns, travel bans
		\item Virtual conferences
		\item Economical and environmentally sustainable
		\item Also facilitates participation from resource constrained regions, unique opportunity to improve scale and outreach
		\item virtual ones are not going to go away in the near future
\end{itemize}}
\gourab{\begin{itemize}
		\item Fundamental difference between physical conf and virtual
		\item Availability issue due to time zone
		\item Older venue-specific schedules are not good...in fact really bad (could reduce participantion significantly) (old: interests only, virtual: interest + availability)
		\item Schedule must stretched to all 24-hour period of a day
		\item Natural objective to optimize: social welfare or total expected participation
		\item Social welfare optimization could introduce disparity in participant satisfactions and also in speaker satisfaction (unfairness)--illustrated in examples
		\item What does unfairness mean for participants and speakers?
\end{itemize}}
\gourab{\begin{itemize}
		\item Formally define the problem setting
		\item As there are two stakeholders here, we formally model the satisfaction for both stakeholders given a conference schedule (through cumulative gain and expected crowd and their normalized forms)
		\item Formally define social welfare from a conference schedule as the total expected crowd
		\item Formally define participant fairness and speaker fairness constraints
		\item Reduce them to fairness objectives through suitable measures of unfairness
\end{itemize}}
\gourab{\begin{itemize}
		\item Both fairness objectives and social welfare are desirable in our setting
		\item However there are tensions between fairness and welfare
		\item Objectives could be in conflict with each other--illustrated in example
		\item Could be impossible to optimize all of them simulatneously because of possible tradeoffs
		\item There is a need to suitably balance these objectives while still caring for them simultaneously
\end{itemize}}
\gourab{\begin{itemize}
		\item Propose joint optimization problem
		\item Balancing fairness objectives through hyperparameter settings
		\item Tested on synthetic and realworld datasets
		\item synthetic dataset with specific patterns of interest and availability (for interesting insights and balance between the objectives)
		\item Empirical results show the efficicacies of joint optimization approach
\end{itemize}}
\gourab{Contributions:
	\begin{itemize}
		\item Modelled the conference scheduling problem along with the satisfaction of the participants and speakers. 
		\item Formally defined social welfare, participant fairness, speaker fairness. Also illustrated some inalienable and fundamental tensions between them.
		\item Proposed joint optimization framework which can suitably balance the objectives while still caring for all of them simultaneously. Empirically show the efficacies of our approach.
\end{itemize}}
\fi

\section{Preliminaries}\label{sec:prelim}
{\bf Problem Setup:}
In a conference, let $\mathcal{P}$, $\mathcal{T}$, and $\mathcal{S}$ represent the sets of participants, planned talks, and available slots (non-overlapping) respectively;
$|\mathcal{P}|=m$, $|\mathcal{T}|=n$, $|\mathcal{S}|=l$;
let $p\in \mathcal{P}$, $t\in\mathcal{T}$, and $s\in\mathcal{S}$ be instances of participant, talk, and slot respectively.
Assuming a talk to be scheduled only once, a conference schedule $\Gamma$ is a mapping $\Gamma:\mathcal{T}\rightarrow\mathcal{S}$.
Note that, in this paper, we limit ourselves to the case with no parallel or overlapping time slots;
this implies that each slot refers to a unique time interval.
Thus, the conference schedule $\Gamma$ is a one-to-one mapping with $n\leq l$.
The goal of a conference scheduling problem is to find a schedule $\Gamma$ which satisfies some specified constraint(s) or optimizes some specified objective(s).

~\\\noindent {\bf Interest Scores $[V_p(\cdot)]$:}
The participants may have different preference levels over the set of talks.
We model this phenomenon using participant-specific interest scores.
Let $V(t|p)=V_p(t)$ represent $p$'s interest score for talk $t$.
Note that the interest score represents the probability of satisfaction of the participant on attending the corresponding talk;
i.e., $V_p(t)\in [0,1], \forall p\in\mathcal{P},t\in\mathcal{T}$.

~\\\noindent {\bf Ease of Availability $[A_p(\cdot)]$:}
In a virtual conference setting, the participants are located in different parts of the world which makes it convenient for them to attend talks only in specific times of the day 
(usually during the day time of their timezone).
Note that participants belonging to same timezone can still have different ease of availability throughout the $24$-hour period.
We model this phenomenon using participant-specific availability scores.
Let $A(s|p)=A_p(s)$ represent the ease of availability score or the probability of $p$ making herself available in slot $s$;
i.e., $A_p(s)\in [0,1], \forall p\in\mathcal{P},s\in\mathcal{S}$. 

\subsection{Participant Satisfaction ($NCG$)}\label{subsec:participant_satis}
In virtual conference settings, a participant's satisfaction depends on both her interest for the talks and her ease of availability for the time slots when the talks are scheduled.
For simplicity, we assume $V_p(\cdot)$ and $A_p(\cdot)$ to be independent of each other, i.e., the interest score of a talk does not affect the ease of availability of a participant in a slot and vice-versa.
However, a participant may still attend a talk scheduled in a slot with less availability score if she has 
a very high interest score for the talk.
This means that the expected gain of a participant $p$ from a talk $t$ in slot $s$ will depend on the joint probability of $p$ making herself available in $s$ and she getting satisfied after attending $t$: i.e, $V_p(t)\times A_p(s)$.
Note that here $V_p(t)\times A_p(s)$ also represents the probability of the participant $p$ attending talk $t$ in slot $s$.
Thus, given a conference schedule $\Gamma$, we define the cumulative gain ($CG$) of a participant as below.
\begin{equation}
	CG(p|\Gamma)=CG_p(\Gamma)=\sum_{t\in\mathcal{T}} V_p(t)\times A_p(\Gamma(t))
\end{equation}
Now, let's imagine a situation wherein the participant $p$ is asked to choose the conference schedule $\Gamma$.
Assuming $p$ to be a selfish and rational agent, she would choose the schedule which benefits her the most; i.e., the one which gives her the highest cumulative gain.
Here, the best conference schedule for $p$ would be the one in which the talk with the highest $V_p$ is scheduled in the slot with the highest $A_p$, the talk with second highest $V_p$ is scheduled in the slot with the second highest $A_p$, and so on;
let $\Gamma_p^*$ be that best conference schedule for $p$.
We call the cumulative gain of $p$ given $\Gamma_p^*$ as her ideal cumulative gain ($ICG$) as defined below.
\begin{equation}
	ICG(p)=ICG_p=\max_\Gamma CG_p(\Gamma)= CG_p(\Gamma_p^*)
\end{equation}
$ICG$ represents the maximum possible cumulative gain depending on the interest scores of the participant and her availability scores.
Thus, a participant with higher overall interests or higher overall availability will naturally have higher $ICG$.
We now define the overall satisfaction of a participant as her normalized cumulative gain ($NCG$) as below.
\begin{equation}
	NCG(p|\Gamma)=NCG_p(\Gamma)=\dfrac{CG_p(\Gamma)}{ICG_p}
\end{equation}
%
The denominator $ICG_p$ is the maximum possible cumulative gain for the participant $p$.
Thus, $NCG_p(\Gamma)\in [0,1]$, $\forall p\in \mathcal{P}$, $\forall \Gamma$.
\subsection{Speaker Satisfaction ($NEC$)}\label{subsec:speaker_satis}
The speaker of a talk gets satisfied if her talk gets participation;
more the participation more will be the speaker satisfaction.
To have higher participation, the talk needs to be scheduled in a slot with high ease of availability of the participants who are highly interested in the talk.
Thus, given a schedule $\Gamma$, we define expected crowd ($EC$) at talk $t$ as below.
\begin{equation}
	EC(t|\Gamma)=EC_t(\Gamma)=\sum_{p\in\mathcal{P}}V_p(t)\times A_p(\Gamma(t))
\end{equation}
Now if the speaker of the talk $t$ is given the task to design the conference schedule, she would try to maximize her expected crowd assuming that each speaker is a selfish and rational agent;
let that best schedule for $t$ be denoted as $\Gamma_t^*$.
We call the expected crowd at talk $t$ with schedule $\Gamma_t^*$ as the ideal expected crowd ($IEC$) of $t$. 
\begin{equation}
	IEC(t)=IEC_t=\max_\Gamma EC_t(\Gamma)=EC_t(\Gamma_t^*)
\end{equation}
$IEC$ represents the maximum value for expected crowd depending on the overall interest scores of the participants and their availability scores.
Thus, a talk with higher overall interests from the participants will have higher $IEC$.
We now define the overall satisfaction of a speaker as the normalized expected crowd ($NEC$) at her talk as below.
\begin{equation}
	NEC(t|\Gamma)=NEC_t(\Gamma)=\dfrac{EC_t(\Gamma)}{IEC_t}
\end{equation}
Note that the denominator $IEC_t$ is the maximum possible value of the expected crowd at talk $t$.
Thus, $NEC_t(\Gamma)\in [0,1]$, $\forall t\in \mathcal{T}$, $\forall \Gamma$.
\subsection{Social Welfare Objective ($TEP$)}\label{subsec:tep}
A natural objective of the  conference organizers is maximizing the social welfare,
i.e., the total participation in the conference.
Given a talk $t$ scheduled in slot $s$, the expectation of the participant $p$ attending it, can be written as $V_p(t)\times A_p(s)$.
Thus, the total expected participation ($TEP$) in the conference with schedule $\Gamma$ can be written as below.
\begin{equation}
	TEP(\Gamma)=  \sum_{p\in \mathcal{P}} \sum_{t\in \mathcal{T}} V_p(t)\times A_p(\Gamma(t))
	\label{eq:TEP}
\end{equation}
It is worth noting that the social welfare, here, is same as: (i) {\it the sum of cumulative gains of all the participants}; (ii) {\it the sum of expected crowd at all the talks}; i.e.,
\begin{multline}
	TEP(\Gamma)= \sum_{p\in \mathcal{P}}\sum_{t\in \mathcal{T}} V_p(t)\times A_p(\Gamma(t))=\sum_{p\in \mathcal{P}} CG(p|\Gamma)\\
	= \sum_{t\in \mathcal{T}}\sum_{p\in \mathcal{P}} V_p(t)\times A_p(\Gamma(t))=\sum_{t\in \mathcal{T}} EC(t|\Gamma)
\end{multline}

Now the natural social welfare objective for conference scheduling can be written as $\argmax_\Gamma TEP(\Gamma)$.
We use $\Gamma^\text{SW}$ to represent the schedule which maximizes the social welfare;
i.e., $\Gamma^\text{SW}=\argmax_\Gamma TEP(\Gamma)$.
%
\if 0
\begin{equation}
	\Phi(t|\Gamma)= \dfrac{\sum_{p\in\mathcal{P}}A_p(\Gamma(t))}{\max_{s\in\mathcal{S}}\sum_{p\in\mathcal{P}}A_p(s)}
\end{equation}
Here the numerator represents the expected participation in the talk given schedule $\Gamma$ while the denominator is the maximum possible participation for the talk.
The denominator acts as a normalizing factor here.
\subsection{Typical Objectives for Scheduling}\label{subsec:typical_obj}
The conference organizers can look at various aspects while deciding the conference schedule ($\Gamma$).
Here, we list two very natural objectives suitable in these cases.

\noindent {\bf Participation Objective (PO):}
A very natural objective here would be to maximize the overall participation in the conference which can be given as below.
\begin{equation}
	\argmax_\Gamma \sum\limits_{t\in\mathcal{T}}\sum\limits_{p\in\mathcal{P}} A_p(\Gamma(t))
\end{equation}
While this objective can ensure maximum crowd, it clearly ignores (i) whether the participants get satisfied after attending, and (ii) whether the talks are attended by relevant individuals leading to good post-talk discussions and good overall impact on the audience.
These aspects are also important and should be considered while scheduling.

\noindent {\bf Meaningful Participation Objective (MPO):}
The objective should take the satisfaction scores ($V_p(\cdot)$) of the participants into consideration in order to ensure meaningful participation.
A higher participation score can lead to better participant satisfaction and more relevant participants in the audience.
Thus, the meaningful participation objective can be given as below.
\begin{equation}
	\argmax_\Gamma \sum\limits_{t\in\mathcal{T}}\sum\limits_{p\in\mathcal{P}} V_p(t)\times A_p(\Gamma(t))

\end{equation}
\fi 

\section{Fairness in Conference Scheduling}\label{sec:motivation}
Optimizing for just participation-based objective could result in the schedule being unfair to the stakeholders involved, i.e., participants and speakers.
Optimum participation could result in very high satisfaction for some participants while other participants may end up very less satisfied;
same could happen for the speakers too.
Thus, we first define the fairness notions for the participants and the speakers in \cref{subsec:p_fairness} and \cref{subsec:s_fairness} respectively.
\subsection{Participant Fairness}\label{subsec:p_fairness}
Disparity in normalized satisfactions of participants is the cause of participant unfairness.
Thus, to ensure fairness for participants, the conference schedule should equally satisfy all the participants.
However, such hard constraint can be infeasible in real-world cases.
Thus, we define a relaxed fairness notion for participants below.
\begin{definition}
	{\bf \boldmath$\epsilon$-Fairness for Participants:}
	For a non-negative real number $\epsilon$, a schedule $\Gamma$ is said to be $\epsilon$-fair for participants iff the following condition is satisfied.
	\begin{equation}
	\abs{NCG(p_i|\Gamma)-NCG(p_j|\Gamma)} \leq \epsilon, \space \forall p_i,p_j\in \mathcal{P}
	\label{eq:participant_fairness}
	\end{equation}
	\label{def:participant_fairness}
\end{definition}
Note that, smaller the value of $\epsilon$, lesser is the disparity between participant satisfactions, and fairer is the conference schedule for the participants.
Thus, to relate participant fairness notions with different $\epsilon$ values, we can write the following lemma.
\begin{lemma}
	If a schedule $\Gamma$ is $\epsilon'$-fair for the participants, then it is also $\epsilon$-fair for participants $\forall \epsilon\geq \epsilon'$.
	\label{lemma:participant_fairness}
\end{lemma}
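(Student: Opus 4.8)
The plan is to prove this directly from \cref{def:participant_fairness}, since the statement is essentially a monotonicity observation about the defining inequality. First I would unpack the hypothesis: assuming $\Gamma$ is $\epsilon'$-fair for participants means, by definition, that $\abs{NCG(p_i|\Gamma)-NCG(p_j|\Gamma)} \leq \epsilon'$ holds for every pair $p_i,p_j\in\mathcal{P}$. Then I would fix an arbitrary $\epsilon \geq \epsilon'$ and an arbitrary pair $p_i,p_j\in\mathcal{P}$, and chain the inequalities: $\abs{NCG(p_i|\Gamma)-NCG(p_j|\Gamma)} \leq \epsilon' \leq \epsilon$. Since the pair was arbitrary, this establishes \cref{eq:participant_fairness} with parameter $\epsilon$, i.e., $\Gamma$ is $\epsilon$-fair for participants.

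The key steps, in order, are: (1) invoke the definition to extract the bound by $\epsilon'$; (2) use the hypothesis $\epsilon \geq \epsilon'$ together with transitivity of $\leq$ to weaken the bound to $\epsilon$; (3) conclude by appealing to the definition in the other direction (the weakened inequality is exactly what $\epsilon$-fairness requires). No case analysis or auxiliary construction is needed, and the argument is uniform over all $\epsilon \geq \epsilon'$, so quantifying over $\epsilon$ at the start costs nothing.

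There is no real obstacle here — the only thing to be careful about is that the same inequality must hold simultaneously for \emph{all} pairs of participants, so the universal quantifier over $p_i,p_j$ should be carried through every step rather than dropped. I would also note in passing that the result is sharp in the trivial sense that it cannot be strengthened to smaller $\epsilon$, which is precisely the remark preceding the lemma about smaller $\epsilon$ meaning a stricter fairness requirement; this is worth a one-line comment but not a separate argument.
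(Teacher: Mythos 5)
Your proposal is correct and is essentially the paper's own argument: the paper's one-line proof likewise notes that pairwise disparities bounded by $\epsilon'$ are bounded by any $\epsilon\geq\epsilon'$ via transitivity of $\leq$. Your version just spells out the quantifiers more carefully, which is fine but not a different route.
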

\begin{proof}
	If pairwise disparities in participant satisfaction are less than $\epsilon'$, then they are also less than $\epsilon$, as $\epsilon'\leq \epsilon$.
\end{proof}

In our setting, the value of $\epsilon$ can also be thought of as the tolerance level for disparity or unfairness in participant satisfactions.
Thus, given a schedule $\Gamma$, we can find the smallest possible $\epsilon$ and use that to represent how unfair is $\Gamma$ to the participants. 
We formally define the metric to measure participant unfairness below.
\begin{definition}
	{\bf Participant Unfairness $\big(\Psi^\text{P}(\Gamma)\big)$:}
	The participant unfairness caused by a schedule $\Gamma$, is the smallest non-negative value of $\epsilon$ such that $\Gamma$ is $\epsilon$-fair for the participants.
	\begin{equation}
		\Psi^\text{P}(\Gamma)=\inf \big\{\epsilon \colon \epsilon\geq 0 \text{ }\& \text{ } \Gamma \text{ }\text{\em  is } \epsilon\text{\em -fair for participants} \big\}
		\label{eq:participant_unfairness}
	\end{equation}
	\label{def:participant_unfairness}
\end{definition}
(Here, the notation $\inf\{\cdot\}$ represents infimum of a set.)
\begin{proposition}
	Participant unfairness metric from \cref{eq:participant_unfairness} can be reduced as below.
	\begin{equation}
		\Psi^\text{P}(\Gamma)=\big\{\max_{p_i\in\mathcal{P}} NCG(p_i|\Gamma)\big\}-\big\{\min_{p_j\in\mathcal{P}} NCG(p_j|\Gamma)\big\}
		\label{eq:participant_unfairness_redn}
	\end{equation}
	\label{prop:participant_unfairness_redn}
\end{proposition}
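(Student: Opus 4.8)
The plan is to show that the infimum in \cref{eq:participant_unfairness} is attained and equals the maximum pairwise gap in $NCG$ values, which in turn equals the difference between the largest and smallest $NCG$. First I would unpack \cref{def:participant_fairness}: $\Gamma$ is $\epsilon$-fair for participants exactly when $\abs{NCG_{p_i}(\Gamma)-NCG_{p_j}(\Gamma)}\leq\epsilon$ holds for \emph{all} pairs $p_i,p_j$. Taking the maximum over pairs, this single family of constraints is equivalent to the one inequality $\max_{p_i,p_j}\abs{NCG_{p_i}(\Gamma)-NCG_{p_j}(\Gamma)}\leq\epsilon$. Since $\mathcal{P}$ is finite, this maximum is a well-defined real number; call it $D(\Gamma)$.

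Next I would argue the two-sided bound. On one hand, $\Gamma$ is $D(\Gamma)$-fair for participants by construction, so $D(\Gamma)$ belongs to the set $\{\epsilon\colon \epsilon\geq 0 \ \&\ \Gamma \text{ is } \epsilon\text{-fair}\}$, giving $\Psi^\text{P}(\Gamma)\leq D(\Gamma)$. On the other hand, any $\epsilon$ in that set must satisfy $\epsilon\geq \abs{NCG_{p_i}(\Gamma)-NCG_{p_j}(\Gamma)}$ for every pair, hence $\epsilon\geq D(\Gamma)$; taking the infimum over all such $\epsilon$ gives $\Psi^\text{P}(\Gamma)\geq D(\Gamma)$. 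Combining, $\Psi^\text{P}(\Gamma)=D(\Gamma)$, and in particular the infimum is attained (it is a minimum). Note the nonnegativity constraint $\epsilon\geq 0$ in the set is harmless since $D(\Gamma)\geq 0$ automatically as an absolute value.

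Finally I would identify $D(\Gamma)$ with the right-hand side of \cref{eq:participant_unfairness_redn}. The standard fact is that for a finite collection of reals $\{x_1,\dots,x_m\}$, $\max_{i,j}\abs{x_i-x_j}=\max_i x_i-\min_j x_j$: the $\leq$ direction holds because $\abs{x_i-x_j}\leq\max_k x_k-\min_k x_k$ for each pair, and the $\geq$ direction is witnessed by choosing the index achieving the max and the index achieving the min. Applying this with $x_p=NCG_p(\Gamma)$ yields $D(\Gamma)=\max_{p_i}NCG_{p_i}(\Gamma)-\min_{p_j}NCG_{p_j}(\Gamma)$, which is exactly \cref{eq:participant_unfairness_redn}.

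There is no serious obstacle here; this is an elementary reformulation. The only point requiring a modicum of care is confirming that the infimum is actually achieved — i.e., that the defining set is closed from below and contains its infimum — which is immediate once we observe $\mathcal{P}$ is finite so $D(\Gamma)$ is a genuine maximum rather than a supremum. All quantities ($NCG_p(\Gamma)\in[0,1]$) are finite, so no degenerate cases arise.
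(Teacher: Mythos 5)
Your proposal is correct and follows essentially the same route as the paper's proof: both reduce $\Psi^\text{P}(\Gamma)$ to the maximum pairwise disparity $\max_{p_i,p_j}\abs{NCG(p_i|\Gamma)-NCG(p_j|\Gamma)}$ and then identify that with $\max_{p_i} NCG(p_i|\Gamma)-\min_{p_j} NCG(p_j|\Gamma)$. Your two-sided bound (membership of $D(\Gamma)$ in the feasible set plus the lower bound on every feasible $\epsilon$) is simply a cleaner, more rigorous rendering of the paper's informal ``keep increasing $\epsilon$ until no pair violates it'' argument, and it correctly notes that the infimum is attained because $\mathcal{P}$ is finite.
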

\begin{proof}
	For a schedule $\Gamma$, let's assume $\Psi^\text{P}(\Gamma)=\epsilon'$.
	Let $\epsilon\in [0,\epsilon')$, then $\Gamma$ is not $\epsilon$-fair for participants (from \cref{def:participant_unfairness}).
	%
	This implies that, there is a pair of participants $p_i,p_j\in\mathcal{P}$ such that $\epsilon<\abs{NCG(p_i|\Gamma)-NCG(p_j|\Gamma)}\leq\epsilon'$ (from \cref{def:participant_fairness}).
	Let's increase $\epsilon$ till $\epsilon=\big|NCG(p_i|\Gamma)-NCG(p_j|\Gamma)\big|$.
	Thus, now the satisfaction disparity between $p_i$ and $p_j$ is at most $\epsilon$, however $\epsilon$ may still be less than $\epsilon'$.
	If we keep on increasing $\epsilon$ like this, there will not be any such opportunity to increase $\epsilon$ further when we reach $\epsilon=\max_{p_i,p_j\in\mathcal{P}} \abs{NCG(p_i|\Gamma)-NCG(p_j|\Gamma)}$.
	Now with the current value of $\epsilon$, $\Gamma$ can be called $\epsilon$-fair for the first time.
	Therefore, $\Psi^\text{P}(\Gamma)=\epsilon'=\max_{p_i,p_j\in\mathcal{P}} \abs{NCG(p_i|\Gamma)-NCG(p_j|\Gamma)}$, which is the maximum pairwise disparity in participant satisfaction.

	As $\forall p\in \mathcal{P}$,
	\[\big\{\max_{p_i\in\mathcal{P}} NCG(p_i|\Gamma)\big\}\geq NCG(p|\Gamma)\geq \big\{\min_{p_j\in\mathcal{P}} NCG(p_j|\Gamma)\big\}\geq0,\]
	the maximum pairwise disparity will be in between the most satisfied participant(s) and the least satisfied participant(s).\\
	$\therefore \Psi^\text{P}(\Gamma)=\big\{\max_{p_i\in\mathcal{P}} NCG(p_i|\Gamma)\big\}-\big\{\min_{p_j\in\mathcal{P}} NCG(p_j|\Gamma)\big\}$
\end{proof}
Using the metric for unfairness above, we can define the following fairness objective for participants.
\begin{definition}
	{\bf Fairness objective for participants} can be defined as finding the schedule $\Gamma$ which minimizes $\Psi^\text{P}(\Gamma)$.
	\begin{equation}
		\argmin_\Gamma \Psi^\text{P}(\Gamma)\equiv \argmin_\Gamma \bigg\{\Big\{\max_{p_i\in\mathcal{P}} NCG(p_i|\Gamma)\Big\}-\Big\{\min_{p_j\in\mathcal{P}} NCG(p_j|\Gamma)\Big\}\bigg\}
		\label{eq:participant_fairness_obj}
	\end{equation}
	\label{def:participant_fairness_obj}
\end{definition}
\subsection{Speaker Fairness}\label{subsec:s_fairness}
Similar to participant fairness, we follow a parity-based notion for speaker fairness and define the relaxed speaker fairness below.
\begin{definition}
	{\bf \boldmath$\epsilon$-Fairness for Speakers}:
	For a non-negative real number $\epsilon$, a schedule $\Gamma$ is said to be $\epsilon$-fair for speakers iff the following condition is satisfied.
	\begin{equation}
	\abs{NEC(t_i|\Gamma)-NEC(t_j|\Gamma)} \leq \epsilon, \space \forall t_i,t_j\in \mathcal{T}
	\label{eq:speaker_fairness}
	\end{equation}
	\label{def:speaker_fairness}
\end{definition}
The \cref{lemma:participant_fairness} can be repeated, here, for speakers too.
\begin{lemma}
	If a schedule $\Gamma$ is $\epsilon'$-fair for the speakers, then it is also $\epsilon$-fair for speakers $\forall \epsilon\geq \epsilon'$.
	\label{lemma:speaker_fairness}
\end{lemma}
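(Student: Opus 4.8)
The plan is to mirror, essentially verbatim, the argument used for \cref{lemma:participant_fairness}, since \cref{def:speaker_fairness} has exactly the same structure as \cref{def:participant_fairness} with $NEC$ and talks playing the roles of $NCG$ and participants. First I would fix an arbitrary $\epsilon \geq \epsilon'$ and assume $\Gamma$ is $\epsilon'$-fair for the speakers, so that $\abs{NEC(t_i|\Gamma) - NEC(t_j|\Gamma)} \leq \epsilon'$ holds for every pair $t_i, t_j \in \mathcal{T}$ by \cref{eq:speaker_fairness}.

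The single substantive step is then the chain of inequalities $\abs{NEC(t_i|\Gamma) - NEC(t_j|\Gamma)} \leq \epsilon' \leq \epsilon$, valid for each pair of talks. Since this is precisely the defining condition \cref{eq:speaker_fairness} for $\epsilon$-fairness, $\Gamma$ is $\epsilon$-fair for the speakers, completing the argument.

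There is no real obstacle here: the result is purely a consequence of the transitivity of $\leq$ applied uniformly over all pairs in $\mathcal{T}$, exactly as in the participant case. If desired, one could phrase it even more compactly by noting that the set $\{\epsilon \geq 0 : \Gamma \text{ is } \epsilon\text{-fair for speakers}\}$ is upward closed, which is immediate from \cref{def:speaker_fairness}. I would keep the written proof to a single sentence, paralleling the proof of \cref{lemma:participant_fairness}.
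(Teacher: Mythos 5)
Your proposal is correct and matches the paper exactly: the paper itself skips this proof, stating it follows the same argument as \cref{lemma:participant_fairness}, which is the one-line transitivity observation ($\abs{NEC(t_i|\Gamma)-NEC(t_j|\Gamma)}\leq\epsilon'\leq\epsilon$ for all pairs) that you give. Nothing further is needed.
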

Following arguments similar to participant unfairness, we define the metric to measure speaker unfairness below.
\begin{definition}
	{\bf Speaker Unfairness $\big(\Psi^\text{S}(\Gamma)\big)$:}
	The speaker unfairness caused by a schedule $\Gamma$, is the smallest non-negative value of $\epsilon$ such that $\Gamma$ is $\epsilon$-fair for the speakers.
	\begin{equation}
	\Psi^\text{S}(\Gamma)=\inf \big\{\epsilon \colon \epsilon\geq 0 \text{ }\& \text{ } \Gamma \text{ }\text{\em  is } \epsilon\text{\em -fair for speakers} \big\}
	\label{eq:speaker_unfairness}
	\end{equation}
	\label{def:speaker_unfairness}
\end{definition}
\begin{proposition}
	Participant unfairness metric from \cref{eq:participant_unfairness} can be reduced as below.
	\begin{equation}
	\Psi^\text{S}(\Gamma)=\big\{\max_{t_i\in\mathcal{T}} NEC(t_i|\Gamma)\big\}-\big\{\min_{t_j\in\mathcal{T}} NEC(t_j|\Gamma)\big\}
	\label{eq:speaker_unfairness_redn}
	\end{equation}
	\label{prop:speaker_unfairness_redn}
\end{proposition}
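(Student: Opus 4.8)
The plan is to replay the proof of \cref{prop:participant_unfairness_redn} almost verbatim, with speakers in place of participants: $NEC$ replaces $NCG$, $\mathcal{T}$ replaces $\mathcal{P}$, and \cref{def:speaker_fairness,def:speaker_unfairness,lemma:speaker_fairness} play the roles of their participant counterparts. The structural symmetry of the two definitions means the argument carries through unchanged, so the main work is just bookkeeping.

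Concretely, I would first fix an arbitrary schedule $\Gamma$ and set $\epsilon' = \Psi^\text{S}(\Gamma)$. For any $\epsilon \in [0,\epsilon')$, by \cref{def:speaker_unfairness} the schedule $\Gamma$ is not $\epsilon$-fair for speakers, so by \cref{def:speaker_fairness} there is a pair $t_i,t_j\in\mathcal{T}$ with $\epsilon < \abs{NEC(t_i|\Gamma)-NEC(t_j|\Gamma)} \le \epsilon'$. Pushing $\epsilon$ upward — each time raising it to the disparity of an offending pair — this process can continue only until $\epsilon = \max_{t_i,t_j\in\mathcal{T}}\abs{NEC(t_i|\Gamma)-NEC(t_j|\Gamma)}$, at which point no offending pair remains and $\Gamma$ is $\epsilon$-fair. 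Hence $\Psi^\text{S}(\Gamma)$ equals this maximum pairwise disparity. Then, since every $NEC_t(\Gamma)\in[0,1]$ lies between $\min_{t_j\in\mathcal{T}} NEC(t_j|\Gamma)$ and $\max_{t_i\in\mathcal{T}} NEC(t_i|\Gamma)$, the largest pairwise gap is realized between the most-exposed and least-exposed talks, giving $\max_{t_i,t_j}\abs{NEC(t_i|\Gamma)-NEC(t_j|\Gamma)} = \big\{\max_{t_i\in\mathcal{T}} NEC(t_i|\Gamma)\big\} - \big\{\min_{t_j\in\mathcal{T}} NEC(t_j|\Gamma)\big\}$, which is the claimed identity in \cref{eq:speaker_unfairness_redn}.

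There is no genuinely hard step here; the only point deserving a line of care is that the infimum in \cref{eq:speaker_unfairness} is actually attained — i.e.\ the set of $\epsilon$ for which $\Gamma$ is $\epsilon$-fair for speakers is the closed ray $[\epsilon',\infty)$ — which follows from \cref{lemma:speaker_fairness} together with the non-strict inequality $\le\epsilon$ in \cref{def:speaker_fairness}, so that $\Gamma$ is itself $\epsilon'$-fair. If one prefers to avoid the informal ``keep increasing $\epsilon$'' phrasing, an equivalent and cleaner route is a two-sided bound: writing $D := \max_{t\in\mathcal{T}} NEC(t|\Gamma) - \min_{t\in\mathcal{T}} NEC(t|\Gamma)$, every pairwise gap is at most $D$, so $\Gamma$ is $D$-fair and $\Psi^\text{S}(\Gamma)\le D$; and the maximizing/minimizing pair witnesses a gap of exactly $D$, so $\Gamma$ is not $\epsilon$-fair for any $\epsilon<D$, giving $\Psi^\text{S}(\Gamma)\ge D$. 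Combining the two yields $\Psi^\text{S}(\Gamma)=D$.
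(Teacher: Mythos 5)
Your proposal is correct and matches the paper's intent exactly: the paper itself omits this proof, stating that it follows the same argument as \cref{prop:participant_unfairness_redn}, and your speaker-side transcription of that argument (with the optional two-sided-bound cleanup) is precisely that symmetric argument.
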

We skip the proofs for \cref{lemma:speaker_fairness} and \cref{prop:speaker_unfairness_redn}, as they will follow arguments and approach similar to those in \cref{lemma:participant_fairness} and \cref{prop:participant_unfairness_redn} respectively.
Using the speaker unfairness metric from \cref{eq:speaker_unfairness_redn}, we define the following fairness objective for speakers.
\begin{definition}
	{\bf Fairness objective for speakers} can be defined as finding the schedule $\Gamma$ which minimizes $\Psi^\text{S}(\Gamma)$.
	\begin{equation}
	\argmin_\Gamma \Psi^\text{S}(\Gamma)\equiv \argmin_\Gamma \bigg\{\Big\{\max_{t_i\in\mathcal{T}} NEC(t_i|\Gamma)\Big\}-\Big\{\min_{t_j\in\mathcal{T}} NEC(t_j|\Gamma)\Big\}\bigg\}
	\label{eq:speaker_fairness_obj}
	\end{equation}
	\label{def:speaker_fairness_obj}
\end{definition}
\section{Balancing Welfare and Fairness}
Given the conference scheduling problem (as defined in \cref{sec:prelim}), the ultimate goal is to find a schedule $\Gamma$ which optimizes social welfare while minimizing participant unfairness and speaker unfairness (as defined in \cref{eq:TEP}, \cref{eq:participant_fairness_obj}, \cref{eq:speaker_fairness_obj} respectively).
%
%
However, simultaneously optimizing for them could prove to be challenging and present disagreement between the objectives.
Thus, first in \cref{subsec:tensions}, we bring out certain difficulties in simultaneously ensuring fairness and social welfare while also highlighting the potential conflicts between them.
Then, in \cref{subsec:joint_opt}, we propose a joint optimization framework for the problem.
\subsection{Tension between Fairness and Welfare}\label{subsec:tensions}
In this section, through a set of claims, 
we illustrate some fundamental tensions between social welfare and fairness.
\begin{claim}
	In the virtual conference scheduling problem (as defined in \cref{sec:prelim}), it is not always possible to gain participant fairness without losing social welfare.
	\label{claim:1}
\end{claim}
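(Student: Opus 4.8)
The statement is a pure existence (``not always'') claim, so the plan is simply to exhibit one instance of the scheduling problem of \cref{sec:prelim} in which the social-welfare-maximizing schedule is strictly unfair to the participants and every strictly fairer schedule has strictly smaller $TEP$. The instance I would use is deliberately minimal: two participants $p_1,p_2$, a single talk $t_1$ that both fully want ($V_{p_1}(t_1)=V_{p_2}(t_1)=1$), and three non-overlapping slots $s_1,s_2,s_3$ with ``mirror-image'' availability profiles, $A_{p_1}(\cdot)=(1,a,0)$ and $A_{p_2}(\cdot)=(0,a,1)$ over $(s_1,s_2,s_3)$, for some $a$ with $2a<1$ (e.g.\ $a=0.4$). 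A slightly larger two-talk, three-slot instance can be used if a less degenerate example is preferred, but a single instance is enough to establish the claim.

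Next I would enumerate the three feasible schedules and, for each, read off the participant satisfactions. Since each participant has a slot where the wanted talk sees availability $1$, we have $ICG_{p_1}=ICG_{p_2}=1$, so $NCG$ equals $CG$ here. Placing $t_1$ in $s_1$ yields $CG=(1,0)$, hence $NCG=(1,0)$, so by \cref{prop:participant_unfairness_redn} $\Psi^\text{P}=1$, while $TEP=1$; by symmetry, placing $t_1$ in $s_3$ also gives $\Psi^\text{P}=1$ and $TEP=1$. Placing $t_1$ in the ``compromise'' slot $s_2$ yields $CG=(a,a)$, hence $NCG=(a,a)$, $\Psi^\text{P}=0$, but $TEP=2a<1$. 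Therefore $\Gamma^\text{SW}$ is one of the two extreme schedules, with $TEP(\Gamma^\text{SW})=1$ and $\Psi^\text{P}(\Gamma^\text{SW})=1$ (the maximum possible unfairness), whereas the only schedule strictly fairer than $\Gamma^\text{SW}$ is the middle one, whose welfare $2a$ is strictly below $1$. Hence any gain in participant fairness forces a loss of social welfare, which is exactly the claim.

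The one step that genuinely needs care — and is really the point of the construction — is breaking the problem's natural symmetry: if the availabilities were chosen so that the pooled availability at the compromise slot equalled that at the extreme slots, then rebalancing would be ``free'' and the claim would fail on that instance. The fix is to make the two participants' availability profiles anti-correlated and strictly sub-additive at the balancing slot, i.e.\ $A_{p_1}(s_2)+A_{p_2}(s_2)<A_{p_1}(s_1)+A_{p_2}(s_1)$, so that equalizing the two $NCG$ values necessarily shifts the talk onto a slot with strictly lower total expected attendance. I would also note that the instance has no fourth schedule, so no alternative can simultaneously match the maximum welfare and reduce unfairness; this is why keeping the instance minimal is convenient, since a larger instance would require additionally checking that no welfare-optimal fair schedule gets reintroduced.
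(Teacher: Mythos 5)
Your proposal is correct and follows essentially the same route as the paper: the paper's own counterexample (Table~1) is exactly your mirror-image instance with $a=0.49$, i.e., two participants with interest $1$ in a single talk, availabilities $(1,0.49,0)$ and $(0,0.49,1)$, so that the welfare-maximizing extreme slots give $TEP=1$ with maximal $NCG$ disparity while the only fairer (compromise) slot drops $TEP$ to $0.98$. Your generalization to any $a$ with $2a<1$ and the explicit remark about strict sub-additivity at the compromise slot just makes the same construction's key condition explicit.
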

\begin{proof}
	Let's assume the opposite argument to be true; i.e., we can always gain participant fairness with no loss to social welfare.
	We disprove this using a counter example with two participants, one talk and three slots as given in \cref{tab:toy_example_1}.
	Both participants have interest score of $1$ for the talk.
	Now looking at the availability scores, while $p_1$ and $p_2$ have full ease of availability in $s_1$ and $s_3$ respectively, they both can make themselves available $s_2$ with $0.49$ probability.
	If we consider a social welfare objective (participation maximization) here, we would end up scheduling the talk in either in $s_1$ or in $s_3$;
	if $\Gamma(t)=s_1$ or $\Gamma(t)=s_3$, then $TEP(\Gamma)=1$;
	if $\Gamma(t)=s_2$, then $TEP(\Gamma)=0.98$ which is less.
	However, maximizing participation will either end up with [$NCG(p_1)=1,NCG(p_2)=0$] if $\Gamma(t)=s_1$ or [$NCG(p_1)=0,NCG(p_2)=1$] if $\Gamma(t)=s_3$;
	As both of these results from social welfare optimization provide disparate satisfaction to the participants, they both are unfair.
	On the other hand, if we schedule the talk in $s_2$ ($\Gamma(t)=s_2$), then it becomes fair to both the participants as they will get similar satisfaction [$NCG(p_1)=0.49,NCG(p_2)=0.49$]---they both get a chance to make themselves available in $s_2$ to attend the talk.
	Even though scheduling the talk in $s_2$, ensures participant fairness, it has come at a loss in social welfare; i.e., $TEP(\Gamma)$ reduced from $1$ to $0.98$.
\end{proof}

\begin{table}[t]
\begin{tabular}{*{5}{|c|c|c|c|c}p{5em}}
	\hline
	\multirow{2}*{\bf Participants} & \multicolumn{1}{c|}{\bf $V_p(t)$} & \multicolumn{3}{c|}{\bf $A_p(s)$}\\\cline{2-5}
	& $t$ & $s_1$ & $s_2$& $s_3$\\\cline{1-5}
	$p_1$ & $1$ & $1$& $0.49$ &$0$ \\\cline{1-5}
	$p_2$  & $1$ &$0$ & $0.49$&$1$\\\cline{1-5}   
\end{tabular}
\caption{\bf Example Problem 1}
\label{tab:toy_example_1}
\vspace{-4mm}
\end{table}
Note that, in the example given in \cref{tab:toy_example_1}, introducing participant fairness has caused a loss in social welfare, and also a loss in the speaker satisfaction $NEC(t)$ (also reduced from $1$ to $0.98$).
Thus, it is important to ask: \textbf{Q1: \textit{to what extent the conference organizer is ready to lose social welfare and speaker satisfaction while bringing in participant fairness?}}
%
\begin{claim}
	In the virtual conference scheduling problem (as defined in \cref{sec:prelim}), it is not always possible to gain speaker fairness with out losing social welfare.
	\label{claim:2}
\end{claim}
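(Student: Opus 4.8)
The plan is to prove \cref{claim:2} by the same strategy used for \cref{claim:1}: assume the negation --- that speaker fairness (equalizing the $NEC_t$ values across talks) can always be achieved with no loss in $TEP$ --- and refute it with a small explicit instance. Since speaker fairness is vacuous with a single talk, the counterexample must have at least two talks; the cleanest choice is the ``transpose'' of \cref{tab:toy_example_1}, swapping the roles of participants and talks.

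Concretely, I would take one participant $p$, two talks $t_1,t_2$, and three slots $s_1,s_2,s_3$, with $V_p(t_1)=V_p(t_2)=1$ and $A_p(s_1)=1$, $A_p(s_2)=A_p(s_3)=0.49$. Then $EC_{t_i}(\Gamma)=A_p(\Gamma(t_i))$ and $IEC_{t_1}=IEC_{t_2}=1$, so $NEC_{t_i}(\Gamma)=A_p(\Gamma(t_i))$, which equals $1$ if $t_i$ is placed in $s_1$ and $0.49$ otherwise. The next step is a case check over the finitely many one-to-one schedules: a schedule that puts one talk in $s_1$ achieves $TEP=1+0.49=1.49$ but necessarily has $\{NEC_{t_1},NEC_{t_2}\}=\{1,0.49\}$, hence $\Psi^\text{S}(\Gamma)=0.51$; the only schedules with $\Psi^\text{S}(\Gamma)=0$ place both talks inside $\{s_2,s_3\}$ and have $TEP=0.49+0.49=0.98<1.49$. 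Thus every welfare-maximizing schedule gives $\Psi^\text{S}=0.51$, and driving speaker unfairness strictly below $0.51$ forces $TEP$ down by $0.51$, contradicting the assumption. I would close by remarking (mirroring the comment after \cref{claim:1}) that here enforcing speaker fairness simultaneously hurts the participant, whose $NCG_p$ drops from $1$ to $0.98/1.49\approx 0.66$, which motivates an analogue of question \textbf{Q1}.

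The step that needs the most care is establishing that there is no ``free'' speaker-fair schedule, i.e., that no welfare-maximizing schedule is speaker-fair --- this is exactly what makes the claim non-trivial, and it dictates the design: the strict value $0.49<0.5$ (so that vacating $s_1$ strictly lowers $TEP$, just as in \cref{claim:1}), and the use of three slots rather than two. With only two slots the normalizer $IEC_{t_i}$ coincides with the crowd $t_i$ would receive in whichever of the two slots is better for it, which typically lets the welfare-optimal schedule already be fair; the extra ``mediocre'' slot is what produces a strictly welfare-inferior fair schedule. For robustness I would note that a two-slot instance also works if one breaks symmetry instead, using a large ``keynote'' talk whose expected crowd is nearly flat across slots and a small ``niche'' talk whose expected crowd is sharply peaked: welfare then insists on giving the keynote its marginally better slot, whereas speaker fairness wants that slot to go to the niche talk.
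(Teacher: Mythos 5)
Your proposal is correct and follows essentially the same route as the paper: the paper also refutes the claim with an explicit one-participant, two-talk, three-slot counterexample (\cref{tab:toy_example_2}) and checks that any schedule improving speaker fairness over the welfare-maximizing one strictly lowers $TEP$. Your instance differs only in the chosen numbers (and has the minor advantage that the speaker-fair schedule attains exact parity, making the exhaustive case check immediate), but the argument is the same counterexample-based strategy.
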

\begin{proof}
	Let's assume the opposite argument to be true; i.e., we can always gain speaker fairness with no loss to social welfare.
	We disprove this using a counter example with just one participant, two talks, and three available slots as given in \cref{tab:toy_example_2}.
	Now, to maximize social welfare, we can just match talks in decreasing order of overall interest scores to slots in decreasing order of availability scores; i.e., $\Gamma^\text{SW}(t_1)=s_1$ and $\Gamma^\text{SW}(t_2)=s_3$, which will yield $TEP(\Gamma^\text{SW})=1.4$.
	The speaker satisfactions for the talks with this schedule will be:
	$NEC(t_1|\Gamma^\text{SW})=1$ (as $EC(t_1|\Gamma^\text{SW})=1$ and $IEC(t_1)=1$), and $NEC(t_2|\Gamma^\text{SW})=0.8$ (as $EC(t_1|\Gamma^\text{SW})=0.4$ and $IEC(t_1)=0.5$).
	Such disparity in speaker satisfactions can be attributed to speaker unfairness.
	In order to reduce speaker-side disparity, we can use a different schedule:
	$\Gamma(t_1)=s_3$ and $\Gamma(t_2)=s_2$;
	this yields speaker satisfactions $NEC(t_1|\Gamma)=0.8$ and $NEC(t_2|\Gamma)=0.75$ (as $EC(t_1|\Gamma)=0.8$ and $EC(t_2|\Gamma)=0.375$).
	This above schedule has in fact the lowest possible disparity in speaker satisfactions, i.e., the highest possible speaker fairness.
	Even though this schedule $\{(t_1,s_3),(t_2,s_2)\}$ is fairer to the speakers than the earlier $\{(t_1,s_1),(t_2,s_3)\}$, the gain in speaker fairness has come at a loss in social welfare; $TEP(\Gamma)$ reduced from $1.4$ to $1.175$.
\end{proof}
\begin{table}[b]
\begin{tabular}{*{6}{|c|c|c|c|c|c}p{5em}}
	\hline
	\multirow{2}*{\bf Participants} & \multicolumn{2}{c|}{\bf $V_p(t)$} & \multicolumn{3}{c|}{\bf $A_p(s)$}\\\cline{2-6}
	& $t_1$ & $t_2$ & $s_1$ & $s_2$& $s_3$\\\cline{1-6}
	$p$ & $1$&$0.5$ &$1$ &$0.75$ &$0.8$ \\\cline{1-6}
\end{tabular}
\caption{\bf Example Problem 2}
\vspace{-4mm}
\label{tab:toy_example_2}
\end{table}

There are two important points to note from the example in \cref{tab:toy_example_2}:
(i) the most fair solution for speakers leaves a very valuable slot $s_1$---with the highest overall availability scores for participants---unused thereby losing a huge opportunity for larger participation;
(ii) speaker fairness has introduced a loss in social welfare ($TEP(\Gamma^\text{SW})=1.4$ to $TEP(\Gamma)=1.175$) and also a loss in participant satisfaction ($NCG(p|\Gamma^\text{SW})=1$ to $NCG(p|\Gamma)\approx0.84$).
Thus, it is important to ask; \textbf{Q2: \textit{to what extent the conference organizer is ready to lose social welfare and participant satisfaction while bringing in speaker fairness?}}
\begin{claim}
	In the virtual conference scheduling problem (as defined in \cref{sec:prelim}), it is not always possible to get speaker fairness with out losing participant fairness and vice-versa.
	\label{claim:3}
\end{claim}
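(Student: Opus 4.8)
\textbf{Proof proposal for Claim~\ref{claim:3}.}
The plan is to exhibit a single small instance of the virtual conference scheduling problem in which \emph{no} schedule simultaneously achieves minimum participant unfairness and minimum speaker unfairness; concretely, the schedule(s) minimizing $\Psi^\text{P}(\Gamma)$ have strictly positive $\Psi^\text{S}(\Gamma)$ relative to the optimum for speakers, and vice versa. Following the pattern of Claims~\ref{claim:1} and~\ref{claim:2}, a counterexample with two participants, two talks, and two or three slots should suffice. The key is to choose interest scores $V_p(t)$ and availability scores $A_p(s)$ so that the ``participant-optimal'' pairing of talks to slots and the ``speaker-optimal'' pairing pull in opposite directions: a slot assignment that equalizes the $NCG$ values across $p_1,p_2$ should necessarily unbalance the $NEC$ values across $t_1,t_2$, and the assignment that equalizes $NEC$ should unbalance $NCG$.

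The concrete steps I would carry out are: (1) Fix a candidate instance — e.g., two talks, two slots (so there are only two possible one-to-one schedules $\Gamma_a$ and $\Gamma_b$), with asymmetric $V$ and $A$ tables chosen so that $\Gamma_a$ gives a smaller $|NCG(p_1)-NCG(p_2)|$ while $\Gamma_b$ gives a smaller $|NEC(t_1)-NEC(t_2)|$. (2) For each of the finitely many schedules, compute $CG_p(\Gamma)$ and $ICG_p$ to get $NCG_p(\Gamma)$, and $EC_t(\Gamma)$ and $IEC_t$ to get $NEC_t(\Gamma)$; here $ICG_p$ and $IEC_t$ are obtained by the greedy ``sort-and-match'' rule described in \cref{subsec:participant_satis,subsec:speaker_satis}. (3) Read off $\Psi^\text{P}(\Gamma)$ and $\Psi^\text{S}(\Gamma)$ via \cref{prop:participant_unfairness_redn,prop:speaker_unfairness_redn} (max minus min of the normalized satisfactions). (4) Verify that $\argmin_\Gamma \Psi^\text{P}(\Gamma)$ and $\argmin_\Gamma \Psi^\text{S}(\Gamma)$ are disjoint sets of schedules, and that moving from one optimum to the other strictly increases the other unfairness measure — this establishes both directions (``not always possible to get speaker fairness without losing participant fairness and vice versa'') at once. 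I would present the instance in a table analogous to \cref{tab:toy_example_1,tab:toy_example_2} for readability.

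The main obstacle is \emph{designing} the instance rather than verifying it: because $NCG$ and $NEC$ are both built from the same bilinear quantities $V_p(t)A_p(\Gamma(t))$, there is a real danger that the participant-fair and speaker-fair schedules coincide (indeed, with one participant or one talk the tension collapses, as Claims~\ref{claim:1},~\ref{claim:2} implicitly show the degenerate cases behave differently). The normalization by $ICG_p$ and $IEC_t$ is what breaks the symmetry, so the tables must be chosen so that the two participants have different $ICG$ values and the two talks have different $IEC$ values, in a way that makes equalizing one ratio incompatible with equalizing the other. I would search over small rational tables — starting from something like $V_{p_1}=(1,\epsilon)$, $V_{p_2}=(\epsilon,1)$ with mildly asymmetric availability — until the two optima provably separate, then hard-code that instance. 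Once a good instance is in hand, step (4) is a routine finite check and the proof is short.
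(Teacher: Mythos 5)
Your overall strategy is exactly the paper's: \cref{claim:3} is proved there by a single small counterexample (two participants, two talks, \cref{tab:toy_example_3}) together with a finite check that the speaker-fair schedule is participant-unfair and the participant-fair schedule is speaker-unfair. However, your write-up stops short of an actual proof: you never exhibit the instance, and you explicitly defer the one step that carries all the content --- ``designing the instance rather than verifying it.'' The worry you raise yourself, namely that the $NCG$-equalizing and $NEC$-equalizing schedules might coincide because both are built from the same bilinear terms $V_p(t)A_p(\Gamma(t))$, is precisely what must be ruled out by concrete numbers, and a search procedure described in prose does not rule it out. As it stands, steps (1)--(4) are a verification template, not a counterexample, so the claim is not yet established.

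For comparison, the paper resolves the design difficulty differently from your proposed starting point $V_{p_1}=(1,\epsilon)$, $V_{p_2}=(\epsilon,1)$ with two or three slots. It gives both participants \emph{identical} interests $(1,0.7)$ (hence equal $ICG=1.7$) and uses \emph{four} slots, more slots than talks: one slot ($s_1$) where both are fully available, two ``segregated'' slots ($s_2$, $s_3$) each exclusive to one participant, and one weak slot ($s_4$) with availability $0.2$ for both. Then $\Gamma=\{(t_1,s_2),(t_2,s_3)\}$ gives $NEC(t_1)=NEC(t_2)=0.5$ but $NCG(p_1)=\tfrac{1}{1.7}\neq\tfrac{0.7}{1.7}=NCG(p_2)$, while $\Gamma'=\{(t_1,s_1),(t_2,s_4)\}$ gives $NCG(p_1)=NCG(p_2)=\tfrac{1.14}{1.7}$ but $NEC(t_1)=1\neq 0.2=NEC(t_2)$. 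The slack of unused slots is what makes the two optima separate cleanly; with only two slots for two talks (only two schedules in total) the separation you want is much harder to engineer, and you would still owe the explicit table and arithmetic. To repair your proposal, either reproduce an instance of this kind and carry out your step (4), or show by exhaustive enumeration over all schedules of your chosen instance that no schedule attains both minimum $\Psi^\text{P}$ and minimum $\Psi^\text{S}$.
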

\begin{proof}
	Let's assume the opposite argument to be true;
	we can always get participant fairness and speaker fairness simultaneously.
	We disprove this using a counter example with two participants, two talks and four available slots as given in \cref{tab:toy_example_3}.
	In this example, the schedule $\Gamma=\{(t_1,s_2),(t_2,s_3)\}$ achieves speaker fairness---$NEC(t_1|\Gamma)=NEC(t_2|\Gamma)=0.5$ (as $EC(t_1|\Gamma)=1$, $EC(t_2|\Gamma)=0.7$, while $IEC(t_1)=2$, $IEC(t_2)=1.4$).
	However, $\Gamma$ is unfair for the participants---$NCG(p_1|\Gamma)=\frac{1}{1.7}<\frac{0.7}{1.7}=NCG(p_2|\Gamma)$ (as $CG(p_1|\Gamma)=1$, $CG(p_2|\Gamma)=0.7$, while $ICG(p_1)=ICG(p_2)=1.7$).
	On the other hand, schedule $\Gamma'=\{(t_1,s_1),(t_2,s_4)\}$ is fair for the participants---$NCG(p_1|\Gamma')=NCG(p_2|\Gamma')=\frac{1.14}{1.7}$, while being unfair for the speakers as $NEC(t_1|\Gamma')=1>0.2=NEC(t_2|\Gamma')$.
	\begin{table}[h]
		\begin{tabular}{*{7}{|c|c|c|c|c}p{5em}}
			\hline
			\multirow{2}*{\bf Participants} & \multicolumn{2}{c|}{\bf $V_p(t)$} & \multicolumn{4}{c|}{\bf $A_p(s)$}\\\cline{2-7}
			& $t_1$ & $t_2$ & $s_1$ & $s_2$& $s_3$ & $s_4$\\\cline{1-7}
			$p_1$ & $1$ & $0.7$ & $1$ & $1$ & $0$ & $0.2$ \\\cline{1-7}
			$p_2$ & $1$ & $0.7$ & $1$ & $0$ & $1$ & $0.2$ \\\cline{1-7}   
		\end{tabular}
		\caption{\bf Example Problem 3}\label{tab:toy_example_3}
	\end{table}
\end{proof}
As seen in the example in \cref{tab:toy_example_3}, participant fairness and speaker fairness can not always be achieved simultaneous.
Thus, an important question is: \textbf{\textit{Q3: to what extent the conference organizer is ready to lose participant fairness while bringing in speaker fairness and vice-versa?}}

~\\It is very evident from the given examples that, both fairness and satisfaction of participants and speakers could often come in conflict with each other and also with the overall social welfare in such conference scheduling problems.
Thus, there is a need to maintain suitable balance between both fairness notions and social welfare while still simultaneously caring for them in conference scheduling.
\subsection{Joint Optimization for Welfare and Fairness}\label{subsec:joint_opt}
We combine participant and speaker fairness objectives with our natural objective of social welfare maximization, and design the following joint optimization problem.
\begin{multline}
\argmax_\Gamma \frac{TEP(\Gamma)}{mn}+ \lambda_1\times \bigg\{\Big\{\min_{p_j\in\mathcal{P}} NCG(p_j|\Gamma)\Big\}-\max_{p_i\in\mathcal{P}} NCG(p_i|\Gamma)\Big\}\bigg\}\\
+\lambda_2\times \bigg\{\Big\{\min_{t_j\in\mathcal{T}} NEC(t_j|\Gamma)\Big\}-\Big\{\max_{t_i\in\mathcal{T}} NEC(t_i|\Gamma)\Big\}\bigg\}
\label{eq:joint_opt}
\end{multline}
Here we normalize the social welfare objective to bring all the three components to similar scales;
i.e., $TEP$ is divided by $|\mathcal{P}|\cdot|\mathcal{T}|=mn$ (it is the maximum possible value for $TEP$---occurs when $V_p(t)=A_p(s)=1$, $\forall p,t,s$).
We also reverse the fairness objective functions from \cref{eq:participant_fairness_obj} and \cref{eq:speaker_fairness_obj} while inserting them in \cref{eq:joint_opt} as it features $\argmax$ instead of $\argmin$, and use $\lambda_1,\lambda_2$ as weights for participant fairness and speaker fairness respectively.

We take a matrix $X$ of dimensions $|\mathcal{T}|\times|\mathcal{S}|$.
The elements of $X$: $X_{t,s}$ is a binary indicator variable for talk $t\in \mathcal{T}$ being scheduled in slot $s\in \mathcal{S}$, i.e., $X_{t,s}=1$ if $t$ is scheduled in $s$ and $0$ otherwise.
Now to operationalize the joint optimization objective in \cref{eq:joint_opt}, we express it as the following integer linear program.

\begin{multline}\small
\argmax_X \text{   }\space \frac{1}{mn}\sum\limits_{t\in\mathcal{T}}\sum\limits_{p\in\mathcal{P}} \sum\limits_{s\in\mathcal{S}}V_p(t)\cdot A_p(s)\cdot X_{t,s}\\ 
+\lambda_1\Bigg[\min_{p_j\in \mathcal{P}} \sum_{t\in\mathcal{T}}\sum_{s\in\mathcal{S}}\frac{V_{p_j}(t)\cdot A_{p_j}(s)}{ICG(p_j)}\cdot X_{t,s}\\-\max_{p_i\in \mathcal{P}}\sum_{t\in\mathcal{T}}\sum_{s\in\mathcal{S}}\frac{V_{p_i}(t)\cdot A_{p_i}(s)}{ICG(p_i)}\cdot X_{t,s}\Bigg]\\
+\lambda_2\Bigg[\min_{t_j\in \mathcal{T}} \sum_{p\in\mathcal{P}}\sum_{s\in\mathcal{S}}\frac{V_{p}(t_j)\cdot A_{p}(s)}{IEC(t_j)}\cdot X_{t_j,s}\\-\max_{t_i\in \mathcal{T}}\sum_{p\in\mathcal{P}}\sum_{s\in\mathcal{S}}\frac{V_{p}(t_i)\cdot A_{p}(s)}{IEC(t_i)}\cdot X_{t_i,s}\Bigg]\\
\text{s.t.   }\\
X_{t,s}\in \{0,1\} \text{   }\forall t\in\mathcal{T}, s\in\mathcal{S}\\
\sum_{s\in \mathcal{S}}X_{t,s}=1,\text{   }\forall t\in\mathcal{T}\\
\sum_{t\in \mathcal{T}}X_{t,s}\leq1,\text{   }\forall s\in\mathcal{S}\\
\end{multline}
Here the first constraint is the integral constraint for the interger program.
Second constraint ensures that, each talk gets scheduled once.
On the other hand, one slot can be allocated to atmost one talk which is ensured by the third constraint.
We use cvxpy (\url{https://www.cvxpy.org/}) paired with Gurobi (\url{https://www.gurobi.com/}) solver for the integer linear programs.\footnote{Even though we solve the joint optimization problem by converting it into an integer linear program which works well for problems of small size, more research is needed in developing approaches 
that are more scalable and computationally efficient. In this paper, we use the integer program approach and focus more towards empirically bringing out fundamental tensions, trade-offs and difficulties involved in fair conference scheduling problem.}
\section{Experimental Evaluation}
\label{sec:experiments}
In this section, we present the baselines, introduce the metrics for evaluations and then show how our proposal compares with the baselines along the presented metrics.

\subsection{Baselines}\label{subsec:baselines}
We use the following baselines and empirically compare them with our approach from \cref{subsec:joint_opt} (further on referred to as {\bf FairConf}).
\begin{enumerate}[topsep=0pt,itemsep=-1ex,partopsep=1ex,parsep=1ex,font=\bfseries]
	\item Social Welfare Maximization ({\bf SWM}): 
	In this baseline, we just optimize the schedule for social welfare; i.e., $\Gamma^\text{SW}$ or $\argmax_\Gamma TEP(\Gamma)$ without any fairness consideration.
	\item Participant Fairness Maximization ({\bf PFair}):
	Here, we just optimize for participant fairness; i.e., minimize participant unfairness [$\argmin_\Gamma \Psi^\text{P}(\Gamma)$] as defined in \cref{eq:participant_fairness_obj}.
	\item Speaker Fairness Maximization ({\bf SFair}):
	Here, we just optimize for speaker fairness; i.e., minimize speaker unfairness [$\argmin_\Gamma \Psi^\text{S}(\Gamma)$] as defined in \cref{eq:speaker_fairness_obj}.
	\item Interest-Availability Matching ({\bf IAM}): 
	Here, we sort the talks in descending order of the overall interest scores received by them, i.e., $\sum_{p\in \mathcal{P}}V_p(t)$, and the slots in descending order of the overall availability scores received by them, i.e., $\sum_{p\in \mathcal{P}}A_p(s)$.
	Now, we assign the talk with the highest overall interest score to the slot with the highest overall availability score, the talk with the second highest overall interest score to the slot with the second highest overall availability score, and so on (with random tie-breaks).
	IAM is one of the naive alternatives when scheduling is done manually (as natural objectives like SWM usually need computing resources).
	It is also worth noting that, in the usual physical conference settings, both SWM and IAM give results which maximize social welfare, as we prove in claim \ref{claim:physical_conf}.
	%
	%
\end{enumerate}
\begin{claim}
In physical conference settings (i.e., all participants have identical ease of availability over all available slots $A_p(s)=A(s)$, $\forall s\in \mathcal{S},p\in \mathcal{P}$), {\bf IAM} yields a conference schedule which maximizes social welfare.
\label{claim:physical_conf}
\end{claim}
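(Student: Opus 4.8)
The plan is to show that in the physical conference setting the social welfare decomposes so that each talk contributes a product of its total interest and the total availability of the slot it is assigned to, and then invoke a rearrangement-inequality argument to conclude that the greedy sorted matching performed by \textbf{IAM} is optimal.

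First I would rewrite $TEP(\Gamma)$ under the hypothesis $A_p(s)=A(s)$ for all $p,s$. Starting from \cref{eq:TEP},
\begin{equation}
TEP(\Gamma)=\sum_{t\in\mathcal{T}}\sum_{p\in\mathcal{P}}V_p(t)\,A_p(\Gamma(t))=\sum_{t\in\mathcal{T}}A(\Gamma(t))\sum_{p\in\mathcal{P}}V_p(t).
\end{equation}
Writing $v_t:=\sum_{p\in\mathcal{P}}V_p(t)$ for the total interest in talk $t$ and $a_s:=|\mathcal{P}|\cdot A(s)=\sum_{p\in\mathcal{P}}A_p(s)$ for the total availability in slot $s$, this is $TEP(\Gamma)=\tfrac{1}{m}\sum_{t\in\mathcal{T}}v_t\,a_{\Gamma(t)}$. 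So maximizing social welfare is exactly maximizing $\sum_t v_t\, a_{\Gamma(t)}$ over one-to-one maps $\Gamma:\mathcal{T}\to\mathcal{S}$.

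Next I would apply the rearrangement inequality: given the multiset of values $\{v_t\}$ and the multiset $\{a_s\}$, the sum $\sum_t v_t a_{\Gamma(t)}$ over injections $\Gamma$ is maximized when the largest $v_t$ is paired with the largest available $a_s$, the second-largest with the second-largest, and so on — which is precisely the matching \textbf{IAM} constructs (sort talks by $\sum_p V_p(t)$ descending, sort slots by $\sum_p A_p(s)$ descending, pair them off in order, leaving the $l-n$ smallest-availability slots unused). A clean way to present this is an exchange argument: if $\Gamma$ is any optimal schedule in which two talks $t,t'$ with $v_t\ge v_{t'}$ are assigned slots with $a_{\Gamma(t)}<a_{\Gamma(t')}$, then swapping their slots changes the objective by $(v_t-v_{t'})(a_{\Gamma(t')}-a_{\Gamma(t)})\ge 0$, so the swapped schedule is at least as good; repeating removes all such inversions and yields the sorted matching. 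One should also note that \textbf{IAM} uses exactly the top $n$ slots by availability — if some used slot had smaller $a_s$ than an unused one, swapping in the unused slot again cannot decrease the objective since all $v_t\ge 0$. Tie-breaks are irrelevant because equal $v_t$ (or equal $a_s$) values make the exchange difference zero.

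The main obstacle — really the only subtlety — is handling the slack $n\le l$ cleanly: \textbf{IAM} is a matching on a subset of slots, so the exchange argument must justify both that an optimal schedule can be taken to use the $n$ highest-availability slots and that, among those, talks and slots are sorted consistently. Both follow from the same nonnegativity-plus-exchange reasoning, so I would state a single lemma-style claim ("any schedule with an inversion or using a suboptimal slot can be modified without decreasing $TEP$") and apply it twice. Everything else is the routine rewriting above and a direct appeal to the rearrangement inequality.
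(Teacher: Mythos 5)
Your proposal is correct and takes essentially the same route as the paper: the paper proves this claim as case (a) of \cref{lemma:V_or_A_same}, using exactly your decomposition $TEP(\Gamma)=\sum_{t\in\mathcal{T}}A(\Gamma(t))\sum_{p\in\mathcal{P}}V_p(t)$ and then concluding that top overall-interest values must be matched with top overall-availability values, which is IAM. The only difference is that you spell out the rearrangement/exchange argument and the treatment of the $n\leq l$ unused slots, details the paper asserts without elaboration.
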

\begin{proof}
	As this is a special case of the scenario covered in \cref{lemma:V_or_A_same}, 
	refer to case (a) of \cref{lemma:V_or_A_same} for the proof.
\end{proof}
\begin{lemma}{\bf IAM} maximizes social welfare, if the participants are identical either in terms of their interests in the talks or in terms of their ease of availability over the available slots, or both.
	\label{lemma:V_or_A_same}
\end{lemma}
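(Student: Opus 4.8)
The plan is to prove the two ``either/or'' cases separately and then note that ``both'' follows a fortiori, since any schedule that is simultaneously covered by both cases is in particular covered by case (a). In each case the strategy is an exchange argument: start from an arbitrary schedule $\Gamma$, and show that repeatedly swapping the slot assignments of two talks so as to move toward the IAM sorted order never decreases $TEP(\Gamma)$; since IAM's output is the unique (up to ties) fixed point of this process, it must be a maximizer.

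Consider case (a): all participants share the same availability profile, $A_p(s) = A(s)$ for all $p, s$. Then $TEP(\Gamma) = \sum_{t \in \mathcal{T}} A(\Gamma(t)) \cdot \big(\sum_{p \in \mathcal{P}} V_p(t)\big) = \sum_{t} A(\Gamma(t)) \cdot W(t)$, where $W(t) := \sum_p V_p(t)$ is the overall interest score of talk $t$. So we are maximizing a sum of products $\sum_t A(\Gamma(t)) W(t)$ over one-to-one assignments of talks to slots. This is the classical rearrangement-inequality situation: if $\Gamma$ assigns a higher-$W$ talk $t_1$ to a lower-$A$ slot and a lower-$W$ talk $t_2$ to a higher-$A$ slot, then $W(t_1) > W(t_2)$ but $A(\Gamma(t_1)) < A(\Gamma(t_2))$, and swapping the two assignments changes the objective by $\big(W(t_1) - W(t_2)\big)\big(A(\Gamma(t_2)) - A(\Gamma(t_1))\big) \ge 0$. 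Hence sorting talks by $W(\cdot)$ and slots by $A(\cdot)$ and matching them in the same order --- exactly what IAM does --- is optimal. Since physical conferences are precisely the instances with $A_p(s) = A(s)$, this also yields \cref{claim:physical_conf}. The argument for case (b), where all participants share the same interest profile $V_p(t) = V(t)$, is symmetric: there $TEP(\Gamma) = \sum_t V(t) \cdot \big(\sum_p A_p(\Gamma(t))\big) = \sum_t V(t) \cdot B(\Gamma(t))$ with $B(s) := \sum_p A_p(s)$ the overall availability of slot $s$, and again the rearrangement inequality says the sorted matching maximizes $\sum_t V(t) B(\Gamma(t))$ --- which is again what IAM produces, since IAM sorts talks by $\sum_p V_p(t) = m\,V(t)$ (same order as $V(t)$) and slots by $\sum_p A_p(s) = B(s)$.

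The only mildly delicate point --- the ``main obstacle,'' such as it is --- is handling ties and the fact that $n \le l$, i.e. there may be more slots than talks, so IAM must also decide which $l - n$ slots go unused. I would argue that the exchange step above also covers replacing a used low-$A$ slot by an unused higher-$A$ slot (formally, treat unused slots as carrying a phantom talk of weight $0$), so pushing talks onto the top-$n$ slots by availability is without loss, and within those the rearrangement argument applies. Ties (equal $W$ values among talks, or equal $A$ values among slots) only create multiple optimal schedules all achieving the same $TEP$; since IAM breaks ties arbitrarily, whatever it outputs still lies in this optimal set, so the claim ``IAM maximizes social welfare'' is unaffected. I would state these two observations briefly rather than belabor them.
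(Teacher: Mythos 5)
Your proposal is correct and follows essentially the same route as the paper: reduce $TEP(\Gamma)$ in each case to a sum of products of overall interest and overall (common) availability, and conclude that the sorted matching produced by IAM maximizes it. The only difference is that you make the final step explicit via a rearrangement/exchange argument and spell out the handling of surplus slots ($n\leq l$) and ties, which the paper's proof asserts without detail—this is added rigor, not a different approach.
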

\begin{proof}
	There are three cases where we need to prove that IAM maximizes social welfare;
	{\bf (a)} if all participants have identical ease of availability over all available slots $A_p(s)=A(s)$, $\forall s\in \mathcal{S},p\in \mathcal{P}$ (this case is similar to physical conference settings where all participants gather at the same place, thus, have identical ease of availability);
	{\bf (b)} if all participants have identical interests over all talks $V_p(t)=V(t)$, $\forall t\in \mathcal{T},p\in \mathcal{P}$;
	{\bf (c)} if both (a) and (b) are true.
	
	We, first, reduce the SWM objectives in the following cases, and observe that they take a particular form where IAM gives solution.\\
	
	\noindent {\bf Case-(a):}
	Given $A_p(s)=A(s)$, $\forall s\in \mathcal{S},p\in \mathcal{P}$.
	From \cref{eq:TEP}:
	\[\text{SWM}\equiv \argmax_\Gamma\sum_{p\in\mathcal{P}}\sum_{t\in\mathcal{T}}V_p(t)\times A_p(\Gamma(t))\]
	\[\equiv \argmax_\Gamma\sum_{p\in\mathcal{P}}\sum_{t\in\mathcal{T}}V_p(t)\times A(\Gamma(t)) \equiv \argmax_\Gamma\sum_{t\in\mathcal{T}}\sum_{p\in\mathcal{P}}V_p(t)\times A(\Gamma(t))\]
	\[\equiv \argmax_\Gamma\sum_{t\in\mathcal{T}} A(\Gamma(t))\Big[\sum_{p\in\mathcal{P}}V_p(t)\Big] \equiv \argmax_\Gamma\sum_{t\in\mathcal{T}} A(\Gamma(t))\times \mathcal{V}(t)\]
	(where overall interest score $=\mathcal{V}(t)=\sum_{p\in\mathcal{P}}V_p(t)$)\\
	
	\noindent {\bf Case-(b):}
	Given $V_p(t)=V(t)$, $\forall t\in \mathcal{T},p\in \mathcal{P}$.
	From \cref{eq:TEP}:
	\[\text{SWM}\equiv \argmax_\Gamma\sum_{p\in\mathcal{P}}\sum_{t\in\mathcal{T}}V_p(t)\times A_p(\Gamma(t))\]
	\[\equiv \argmax_\Gamma\sum_{p\in\mathcal{P}}\sum_{t\in\mathcal{T}}V(t)\times A_p(\Gamma(t)) \equiv \argmax_\Gamma\sum_{t\in\mathcal{T}}\sum_{p\in\mathcal{P}}V(t)\times A_p(\Gamma(t))\]
	\[\equiv \argmax_\Gamma\sum_{t\in\mathcal{T}} V(t)\Big[\sum_{p\in\mathcal{P}}A_p(\Gamma(t))\Big] \equiv \argmax_\Gamma\sum_{t\in\mathcal{T}} V(t)\times \mathcal{A}(\Gamma(t))\]
	(where overall availability score $=\mathcal{A}(s)=\sum_{p\in\mathcal{P}}A_p(s)$)\\
	
	\noindent {\bf Case-(c):}
	Given $A_p(s)=A(s),V_p(t)=V(t)$, $\forall s\in \mathcal{S},\forall t\in \mathcal{T},p\in \mathcal{P}$.
	From \cref{eq:TEP}:
	\[\text{SWM}\equiv \argmax_\Gamma\sum_{p\in\mathcal{P}}\sum_{t\in\mathcal{T}}V_p(t)\times A_p(\Gamma(t))\]
	\[\equiv \argmax_\Gamma\sum_{p\in\mathcal{P}}\sum_{t\in\mathcal{T}}V(t)\times A(\Gamma(t)) \equiv \argmax_\Gamma\sum_{t\in\mathcal{T}}\sum_{p\in\mathcal{P}}V(t)\times A(\Gamma(t))\]
	\[\equiv \argmax_\Gamma\sum_{t\in\mathcal{T}}V(t)\times A(\Gamma(t))\]
	
	\noindent In all the three cases above, SWM reduces to a form where the terms are independent of individual participant interests and availability while depending only on the overall interest levels ($\mathcal{V}$ in case (a), and participant-independent $V$ in case (b) \& (c)) and overall availability ($\mathcal{A}$ in case (b), and participant-independent $A$ in case (a) \& (c)).
	Thus, to maximize the reduced objectives in all the cases, top values of overall availability $A$ or $\mathcal{A}$ need to be matched with top values of overall interests $\mathcal{V}$ or $V$ ---making it identical to IAM.
\end{proof}
Note that, when there are ties in scenarios mentioned in \cref{lemma:V_or_A_same}, IAM approach could also be made to give multiple solutions which maximize social welfare;
however, we use random tie breaks.
\subsection{Evaluation Metrics}\label{subsec:eval_metrics}
We use the following metrics to capture the performances from participant, speaker, and social welfare perspectives.

\subsubsection{\bf Participant-Side Metrics}
We measure the mean satisfaction of participants ($NCG_\text{mean}=\frac{\sum_{p\in \mathcal{P}}NCG_p}{\abs{\mathcal{P}}}$) as it is an indicator of how efficient is the schedule for participants.
We also measure the participant unfairness (as defined in \cref{eq:participant_unfairness_redn}: $\Psi^\text{P}=NCG_\text{max}-NCG_\text{min}$).
Lower values of $\Psi^\text{P}$ represent better participant fairness.

\subsubsection{\bf Speaker-Side Metrics}
\label{subsubsec:speaker_metrics}
Similar to participant-side metrics, on speaker-side also, we measure the mean satisfaction of speakers ($NEC_\text{mean}=\frac{\sum_{t\in \mathcal{T}}NEC_t}{\abs{\mathcal{T}}}$) as an indicator of efficiency of the schedule for speakers, and also the speaker unfairness (as defined in \cref{eq:speaker_unfairness_redn}: $\Psi^\text{S}=NEC_\text{max}-NEC_\text{min}$). Lower values of $\Psi^\text{S}$ represent better speaker fairness.

\subsubsection{\bf Social Welfare Metric}
While taking into account the participant fairness and speaker fairness, there could be some loss in social welfare or the total expected participation as illustrated in examples in \cref{sec:motivation}.
Thus, we also measure the social welfare achieved by the conference schedules obtained by our approach and the baseline approaches to gauge the loss of welfare in comparison to the maximum possible social welfare.
We use $TEP$ metric (as defined in \cref{eq:TEP}) for this.
\begin{figure*}[t!]
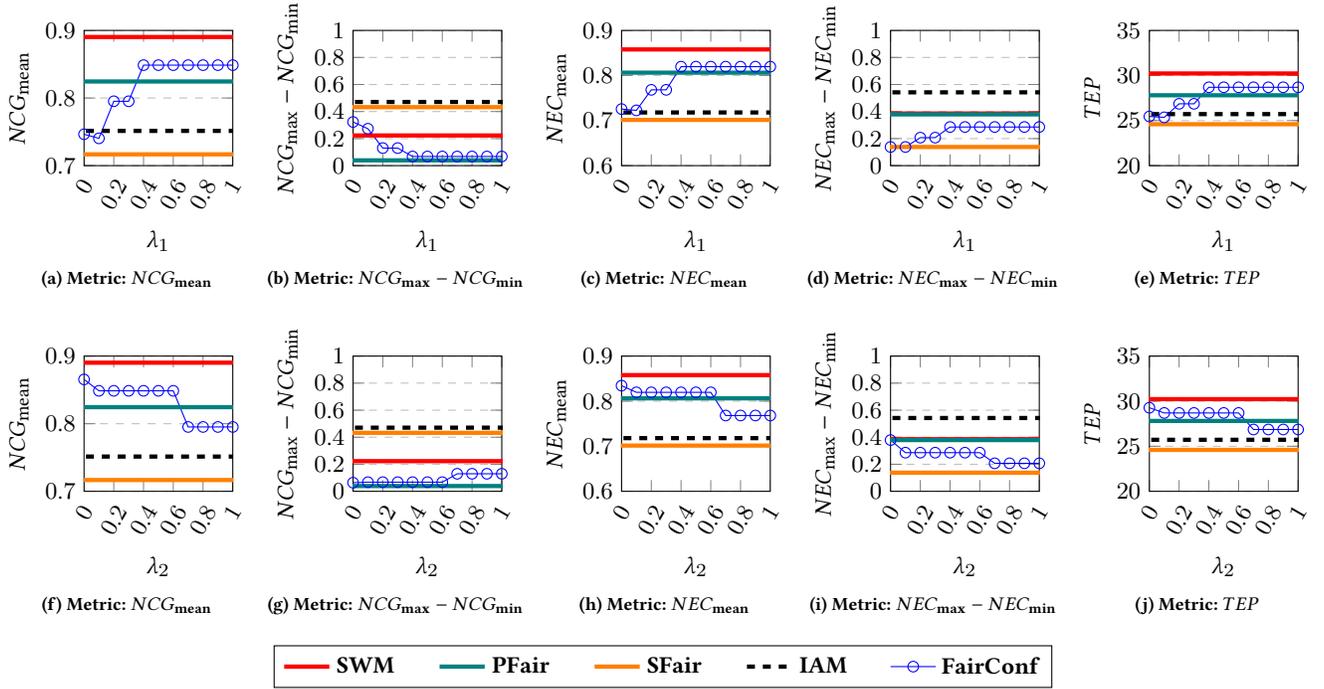

	\center{
		\subfloat[{\bf Metric: $NCG_\text{mean}$}]{\pgfplotsset{width=0.2\textwidth,height=0.19\textwidth,compat=1.9}
			\input{figures/syn_part_mu_lam1}\label{fig:syn_part_mu_lam1}}
		\hfil
		\subfloat[{\bf Metric: $NCG_\text{max}-NCG_\text{min}$}]{\pgfplotsset{width=0.2\textwidth,height=0.19\textwidth,compat=1.9}
			\input{figures/syn_part_unfair_lam1}\label{fig:syn_part_unfair_lam1}}
		\hfil
		\subfloat[{\bf Metric: $NEC_\text{mean}$}]{\pgfplotsset{width=0.2\textwidth,height=0.19\textwidth,compat=1.9}
			\input{figures/syn_spea_mu_lam1}\label{fig:syn_spea_mu_lam1}}
		\hfil
		\subfloat[{\bf Metric: $NEC_\text{max}-NEC_\text{min}$}]{\pgfplotsset{width=0.2\textwidth,height=0.19\textwidth,compat=1.9}
			\input{figures/syn_spea_unfair_lam1}\label{fig:syn_spea_unfair_lam1}}
		\hfil
		\subfloat[{\bf Metric: $TEP$}]{\pgfplotsset{width=0.2\textwidth,height=0.19\textwidth,compat=1.9}
			\input{figures/syn_TEP_lam1}\label{fig:syn_TEP_lam1}}
		\hfil
		\subfloat[{\bf Metric: $NCG_\text{mean}$}]{\pgfplotsset{width=0.2\textwidth,height=0.19\textwidth,compat=1.9}
			\input{figures/syn_part_mu_lam2}\label{fig:syn_part_mu_lam2}}
		\hfil
		\subfloat[{\bf Metric: $NCG_\text{max}-NCG_\text{min}$}]{\pgfplotsset{width=0.2\textwidth,height=0.19\textwidth,compat=1.9}
			\input{figures/syn_part_unfair_lam2}\label{fig:syn_part_unfair_lam2}}
		\hfil
		\subfloat[{\bf Metric: $NEC_\text{mean}$}]{\pgfplotsset{width=0.2\textwidth,height=0.19\textwidth,compat=1.9}
			\input{figures/syn_spea_mu_lam2}\label{fig:syn_spea_mu_lam2}}
		\hfil
		\subfloat[{\bf Metric: $NEC_\text{max}-NEC_\text{min}$}]{\pgfplotsset{width=0.2\textwidth,height=0.19\textwidth,compat=1.9}
			\input{figures/syn_spea_unfair_lam2}\label{fig:syn_spea_unfair_lam2}}
		\hfil
		\subfloat[{\bf Metric: $TEP$}]{\pgfplotsset{width=0.2\textwidth,height=0.19\textwidth,compat=1.9}
			\input{figures/syn_TEP_lam2}\label{fig:syn_TEP_lam2}}
		\vfil
		\subfloat{\pgfplotsset{width=.7\textwidth,compat=1.9}
			\begin{tikzpicture}
			\begin{customlegend}[legend entries={{\bf SWM},{\bf PFair},{\bf SFair},{\bf IAM}, {\bf FairConf}},legend columns=5,legend style={/tikz/every even column/.append style={column sep=0.5cm}}]
			\addlegendimage{red,mark=.,ultra thick,sharp plot}
			\addlegendimage{teal,mark=.,ultra thick,sharp plot}
			\addlegendimage{orange,mark=.,ultra thick,sharp plot}
			\addlegendimage{dashed,black,mark=.,ultra thick,sharp plot}
			\addlegendimage{blue,mark=o,sharp plot}
			\end{customlegend}
			\end{tikzpicture}}
	}\caption{Results on synthetic dataset with random interests and availability. For the plots in first row, $\lambda_2$ is fixed at $0.5$, and $\lambda_1$ is varied. For the plots in second row, $\lambda_1$ is fixed at $0.5$, and $\lambda_2$ is varied.}\label{fig:syn_random}
\end{figure*}
\subsection{Experimental Results}\label{subsec:syn_experiments}
First, in \cref{subsubsec:syn_random}, we show results on a synthetic dataset with random interest and availability scores. 
Then, in \cref{subsubsec:syn_segA,subsubsec:syn_segA_Imb,subsubsec:syn_segV,subsubsec:syn_segV_Imb}, we experiment on some special cases of synthetic datasets---with specific patterns in the data---to illustrate interesting nuances and difficulties involved in the scheduling problem.
Note that, we vary the hyperparameters $\lambda_1$ and $\lambda_2$ in between $0$ and $1$ in separate trials.
\subsubsection{\bf Random Interests and Availability:}\label{subsubsec:syn_random}
We take $\abs{\mathcal{P}}=10$, $\abs{\mathcal{T}}=10$, $\abs{\mathcal{P}}=10$, and generate the synthetic dataset where the slots represent non-overlapping equal-sized time intervals---not particularly required to be in any sequence---available for scheduling.
For this dataset, the interest scores and availability scores are sampled from a uniform random distribution in $[0,1]$; i.e., $V_p(t)\sim \text{Uniform}([0,1])$ and $A_p(s)\sim \text{Uniform}([0,1])$, $\forall p,t,s$.
We plot the results for this dataset in \cref{fig:syn_random}.
Note that, unlike FairConf, the baseline approaches do not have hyperparameters $\lambda_1,\lambda_2$;
thus, baseline results are just horizontal straight lines while FairConf's results vary with hyperparameter settings.
~\\{\bf Baseline Results:}
SWM achieves the highest expected participation $TEP$ (by definition it should), the highest mean participant satisfaction and mean speaker satisfaction (refer \cref{fig:syn_part_mu_lam1,fig:syn_spea_mu_lam1,fig:syn_TEP_lam1}) while performing poorly on participant and speaker fairness (\cref{fig:syn_part_unfair_lam1,fig:syn_spea_unfair_lam1}).
On the other hand, the naive IAM performs poorly in all the metrics.
As PFair optimizes only for participant fairness, it has the highest participant fairness (least unfairness in \cref{fig:syn_part_unfair_lam1}) while losing in all the other metrics.
The opposite happens with SFair;
it performs the best in speaker fairness (least unfairness in \cref{fig:syn_spea_unfair_lam1}) while losing in all other metrics as SFair optimizes only for speaker fairness.
~\\{\bf FairConf Results:}
Note that, for the plots in first row (\cref{fig:syn_part_mu_lam1} to \cref{fig:syn_TEP_lam1}), we fix $\lambda_2=0.5$ and vary $\lambda_1$ from $0$ to $1$;
for the plots in second row (\cref{fig:syn_part_mu_lam2} to \cref{fig:syn_TEP_lam2}), we fix $\lambda_1=0.5$ and vary $\lambda_2$ from $0$ to $1$.
The general trends observed in FairConf's results are: with increase in the weight for participant fairness ($\lambda_1$), FairConf achieves better participant fairness (decrease in participant unfairness in \cref{fig:syn_part_unfair_lam1}) but worse speaker fairness (increase in speaker unfairness in \cref{fig:syn_spea_unfair_lam1});
with increase in the weight for speaker fairness ($\lambda_2$), FairConf achieves better speaker fairness (decrease in speaker unfairness in \cref{fig:syn_spea_unfair_lam2}) but worse participant fairness (increase in participant unfairness in \cref{fig:syn_part_unfair_lam2}).
We find that FairConf with the setting of $\lambda_1=\lambda_2=0.5$ gives a balanced performance across all the metrics;
it performs good in both participant fairness (very small unfairness in \cref{fig:syn_part_unfair_lam1}-- also close to PFair) and speaker fairness (very small unfairness in \cref{fig:syn_spea_unfair_lam1}-- only little higher than SFair) while causing only marginal losses in mean participant satisfaction (\cref{fig:syn_part_mu_lam1}), mean speaker satisfaction (\cref{fig:syn_spea_mu_lam1}), and the social welfare (\cref{fig:syn_TEP_lam1}).

\begin{figure}
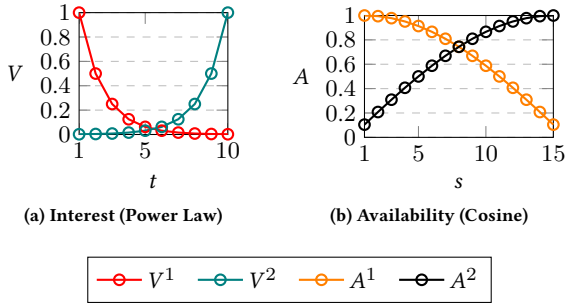

	\center{\subfloat[{\bf Interest (Power Law)}]{\pgfplotsset{width=0.2\textwidth,height=0.18\textwidth,compat=1.9}
			\input{figures/syn_V}\label{fig:syn_V}}
		\hfil
		\subfloat[{\bf Availability (Cosine)}]{\pgfplotsset{width=0.23\textwidth,height=0.18\textwidth,compat=1.9}
			\input{figures/syn_A}\label{fig:syn_A}}
		\vfil
		\subfloat{\pgfplotsset{width=.7\textwidth,compat=1.9}
			\begin{tikzpicture}
			\begin{customlegend}[legend entries={{\bf $V^1$},{\bf $V^2$},{\bf $A^1$},{\bf $A^2$}, {\bf FairConf}},legend columns=4,legend style={/tikz/every even column/.append style={column sep=0.2cm}}]
			\addlegendimage{red,mark=o,thick,sharp plot}
			\addlegendimage{teal,mark=o,thick,sharp plot}
			\addlegendimage{orange,mark=o,thick,sharp plot}
			\addlegendimage{black,mark=o,thick,sharp plot}
			\end{customlegend}
			\end{tikzpicture}}
	}\caption{Interest and Availability Patterns}\label{fig:syn_data}
\end{figure}
%
\begin{figure*}
	\center{\small
		\subfloat[{\bf Metric: $NCG_\text{mean}$}]{\pgfplotsset{width=0.2\textwidth,height=0.19\textwidth,compat=1.9}
			\input{figures/syn_segA_part_mu_lam1}\label{fig:syn_segA_part_mu_lam1}}
		\hfil
		\subfloat[{\bf Metric: $NCG_\text{max}-NCG_\text{min}$}]{\pgfplotsset{width=0.19\textwidth,height=0.19\textwidth,compat=1.9}
			\input{figures/syn_segA_part_unfair_lam1}\label{fig:syn_segA_part_unfair_lam1}}
		\hfil
		\subfloat[{\bf Metric: $NEC_\text{mean}$}]{\pgfplotsset{width=0.2\textwidth,height=0.19\textwidth,compat=1.9}
			\input{figures/syn_segA_spea_mu_lam1}\label{fig:syn_segA_spea_mu_lam1}}
		\hfil
		\subfloat[{\bf Metric: $NEC_\text{max}-NEC_\text{min}$}]{\pgfplotsset{width=0.2\textwidth,height=0.19\textwidth,compat=1.9}
			\input{figures/syn_segA_spea_unfair_lam1}\label{fig:syn_segA_spea_unfair_lam1}}
		\hfil
		\subfloat[{\bf Metric: $TEP$}]{\pgfplotsset{width=0.2\textwidth,height=0.19\textwidth,compat=1.9}
			\input{figures/syn_segA_TEP_lam1}\label{fig:syn_segA_TEP_lam1}}
		\hfil
		\subfloat[{\bf Metric: $NCG_\text{mean}$}]{\pgfplotsset{width=0.2\textwidth,height=0.19\textwidth,compat=1.9}
			\input{figures/syn_segA_part_mu_lam2}\label{fig:syn_segA_part_mu_lam2}}
		\hfil
		\subfloat[{\bf Metric: $NCG_\text{max}-NCG_\text{min}$}]{\pgfplotsset{width=0.19\textwidth,height=0.19\textwidth,compat=1.9}
			\input{figures/syn_segA_part_unfair_lam2}\label{fig:syn_segA_part_unfair_lam2}}
		\hfil
		\subfloat[{\bf Metric: $NEC_\text{mean}$}]{\pgfplotsset{width=0.2\textwidth,height=0.19\textwidth,compat=1.9}
			\input{figures/syn_segA_spea_mu_lam2}\label{fig:syn_segA_spea_mu_lam2}}
		\hfil
		\subfloat[{\bf Metric: $NEC_\text{max}-NEC_\text{min}$}]{\pgfplotsset{width=0.2\textwidth,height=0.19\textwidth,compat=1.9}
			\input{figures/syn_segA_spea_unfair_lam2}\label{fig:syn_segA_spea_unfair_lam2}}
		\hfil
		\subfloat[{\bf Metric: $TEP$}]{\pgfplotsset{width=0.2\textwidth,height=0.19\textwidth,compat=1.9}
			\input{figures/syn_segA_TEP_lam2}\label{fig:syn_segA_TEP_lam2}}
		\vfil
		\subfloat{\pgfplotsset{width=.7\textwidth,compat=1.9}
			\begin{tikzpicture}
			\begin{customlegend}[legend entries={{\bf SWM},{\bf PFair},{\bf SFair},{\bf IAM}, {\bf FairConf}},legend columns=5,legend style={/tikz/every even column/.append style={column sep=0.5cm}}]
			\addlegendimage{red,mark=.,ultra thick,sharp plot}
			\addlegendimage{teal,mark=.,ultra thick,sharp plot}
			\addlegendimage{orange,mark=.,ultra thick,sharp plot}
			\addlegendimage{dashed,black,mark=.,ultra thick,sharp plot}
			\addlegendimage{blue,mark=o,sharp plot}
			\end{customlegend}
			\end{tikzpicture}}
	}\caption{Results on data with balanced participant groups (indentical interests, segregated availability). For the plots in first row, $\lambda_2$ is fixed at $0.5$, and $\lambda_1$ is varied. For the plots in second row, $\lambda_1$ is fixed at $0.5$, and $\lambda_2$ is varied.}\label{fig:syn_segA}
\end{figure*}
%
\begin{figure*}
	\center{\small
		\subfloat[{\bf Metric: $NCG_\text{mean}$}]{\pgfplotsset{width=0.2\textwidth,height=0.19\textwidth,compat=1.9}
			\input{figures/syn_segA_Imb_part_mu_lam1}\label{fig:syn_segA_Imb_part_mu_lam1}}
		\hfil
		\subfloat[{\bf Metric: $NCG_\text{max}-NCG_\text{min}$}]{\pgfplotsset{width=0.2\textwidth,height=0.19\textwidth,compat=1.9}
			\input{figures/syn_segA_Imb_part_unfair_lam1}\label{fig:syn_segA_Imb_part_unfair_lam1}}
		\hfil
		\subfloat[{\bf Metric: $NEC_\text{mean}$}]{\pgfplotsset{width=0.2\textwidth,height=0.19\textwidth,compat=1.9}
			\input{figures/syn_segA_Imb_spea_mu_lam1}\label{fig:syn_segA_Imb_spea_mu_lam1}}
		\hfil
		\subfloat[{\bf Metric: $NEC_\text{max}-NEC_\text{min}$}]{\pgfplotsset{width=0.2\textwidth,height=0.19\textwidth,compat=1.9}
			\input{figures/syn_segA_Imb_spea_unfair_lam1}\label{fig:syn_segA_Imb_spea_unfair_lam1}}
		\hfil
		\subfloat[{\bf Metric: $TEP$}]{\pgfplotsset{width=0.2\textwidth,height=0.19\textwidth,compat=1.9}
			\input{figures/syn_segA_Imb_TEP_lam1}\label{fig:syn_segA_Imb_TEP_lam1}}
		\hfil
		\subfloat[{\bf Metric: $NCG_\text{mean}$}]{\pgfplotsset{width=0.2\textwidth,height=0.19\textwidth,compat=1.9}
			\input{figures/syn_segA_Imb_part_mu_lam2}\label{fig:syn_segA_Imb_part_mu_lam2}}
		\hfil
		\subfloat[{\bf Metric: $NCG_\text{max}-NCG_\text{min}$}]{\pgfplotsset{width=0.2\textwidth,height=0.19\textwidth,compat=1.9}
			\input{figures/syn_segA_Imb_part_unfair_lam2}\label{fig:syn_segA_Imb_part_unfair_lam2}}
		\hfil
		\subfloat[{\bf Metric: $NEC_\text{mean}$}]{\pgfplotsset{width=0.2\textwidth,height=0.19\textwidth,compat=1.9}
			\input{figures/syn_segA_Imb_spea_mu_lam2}\label{fig:syn_segA_Imb_spea_mu_lam2}}
		\hfil
		\subfloat[{\bf Metric: $NEC_\text{max}-NEC_\text{min}$}]{\pgfplotsset{width=0.2\textwidth,height=0.19\textwidth,compat=1.9}
			\input{figures/syn_segA_Imb_spea_unfair_lam2}\label{fig:syn_segA_Imb_spea_unfair_lam2}}
		\hfil
		\subfloat[{\bf Metric: $TEP$}]{\pgfplotsset{width=0.2\textwidth,height=0.19\textwidth,compat=1.9}
			\input{figures/syn_segA_Imb_TEP_lam2}\label{fig:syn_segA_Imb_TEP_lam2}}
		\vfil
		\subfloat{\pgfplotsset{width=.7\textwidth,compat=1.9}
			\begin{tikzpicture}
			\begin{customlegend}[legend entries={{\bf SWM},{\bf PFair},{\bf SFair},{\bf IAM}, {\bf FairConf}},legend columns=5,legend style={/tikz/every even column/.append style={column sep=0.5cm}}]
			\addlegendimage{red,mark=.,ultra thick,sharp plot}
			\addlegendimage{teal,mark=.,ultra thick,sharp plot}
			\addlegendimage{orange,mark=.,ultra thick,sharp plot}
			\addlegendimage{dashed,black,mark=.,ultra thick,sharp plot}
			\addlegendimage{blue,mark=o,sharp plot}
			\end{customlegend}
			\end{tikzpicture}}
	}\caption{Results on data with imbalanced participant groups (identical interests, segregated availability). For the plots in first row, $\lambda_2$ is fixed at $0.5$, and $\lambda_1$ is varied. For the plots in second row, $\lambda_1$ is fixed at $0.5$, and $\lambda_2$ is varied.}\label{fig:syn_segA_Imb}
\end{figure*}
%
\begin{figure*}
	\center{\small
		\subfloat[{\bf Metric: $NCG_\text{mean}$}]{\pgfplotsset{width=0.2\textwidth,height=0.19\textwidth,compat=1.9}
			\input{figures/syn_segV_part_mu_lam1}\label{fig:syn_segV_part_mu_lam1}}
		\hfil
		\subfloat[{\bf Metric: $NCG_\text{max}-NCG_\text{min}$}]{\pgfplotsset{width=0.2\textwidth,height=0.19\textwidth,compat=1.9}
			\input{figures/syn_segV_part_unfair_lam1}\label{fig:syn_segV_part_unfair_lam1}}
		\hfil
		\subfloat[{\bf Metric: $NEC_\text{mean}$}]{\pgfplotsset{width=0.2\textwidth,height=0.19\textwidth,compat=1.9}
			\input{figures/syn_segV_spea_mu_lam1}\label{fig:syn_segV_spea_mu_lam1}}
		\hfil
		\subfloat[{\bf Metric: $NEC_\text{max}-NEC_\text{min}$}]{\pgfplotsset{width=0.2\textwidth,height=0.19\textwidth,compat=1.9}
			\input{figures/syn_segV_spea_unfair_lam1}\label{fig:syn_segV_spea_unfair_lam1}}
		\hfil
		\subfloat[{\bf Metric: $TEP$}]{\pgfplotsset{width=0.2\textwidth,height=0.19\textwidth,compat=1.9}
			\input{figures/syn_segV_TEP_lam1}\label{fig:syn_segV_TEP_lam1}}
		\hfil
		\subfloat[{\bf Metric: $NCG_\text{mean}$}]{\pgfplotsset{width=0.2\textwidth,height=0.19\textwidth,compat=1.9}
			\input{figures/syn_segV_part_mu_lam2}\label{fig:syn_segV_part_mu_lam2}}
		\hfil
		\subfloat[{\bf Metric: $NCG_\text{max}-NCG_\text{min}$}]{\pgfplotsset{width=0.2\textwidth,height=0.19\textwidth,compat=1.9}
			\input{figures/syn_segV_part_unfair_lam2}\label{fig:syn_segV_part_unfair_lam2}}
		\hfil
		\subfloat[{\bf Metric: $NEC_\text{mean}$}]{\pgfplotsset{width=0.2\textwidth,height=0.19\textwidth,compat=1.9}
			\input{figures/syn_segV_spea_mu_lam2}\label{fig:syn_segV_spea_mu_lam2}}
		\hfil
		\subfloat[{\bf Metric: $NEC_\text{max}-NEC_\text{min}$}]{\pgfplotsset{width=0.2\textwidth,height=0.19\textwidth,compat=1.9}
			\input{figures/syn_segV_spea_unfair_lam2}\label{fig:syn_segV_spea_unfair_lam2}}
		\hfil
		\subfloat[{\bf Metric: $TEP$}]{\pgfplotsset{width=0.2\textwidth,height=0.19\textwidth,compat=1.9}
			\input{figures/syn_segV_TEP_lam2}\label{fig:syn_segV_TEP_lam2}}
		\vfil
		\subfloat{\pgfplotsset{width=.7\textwidth,compat=1.9}
			\begin{tikzpicture}
			\begin{customlegend}[legend entries={{\bf SWM},{\bf PFair},{\bf SFair},{\bf IAM}, {\bf FairConf}},legend columns=5,legend style={/tikz/every even column/.append style={column sep=0.5cm}}]
			\addlegendimage{red,mark=.,ultra thick,sharp plot}
			\addlegendimage{teal,mark=.,ultra thick,sharp plot}
			\addlegendimage{orange,mark=.,ultra thick,sharp plot}
			\addlegendimage{dashed,black,mark=.,ultra thick,sharp plot}
			\addlegendimage{blue,mark=o,sharp plot}
			\end{customlegend}
			\end{tikzpicture}}
	}\caption{Results on data with balanced participant groups (segregated interests, identical availability). For the plots in first row, $\lambda_2$ is fixed at $0.5$, and $\lambda_1$ is varied. For the plots in second row, $\lambda_1$ is fixed at $0.5$, and $\lambda_2$ is varied.}\label{fig:syn_segV}
\end{figure*}
%
\begin{figure*}
	\center{\small
		\subfloat[{\bf Metric: $NCG_\text{mean}$}]{\pgfplotsset{width=0.2\textwidth,height=0.19\textwidth,compat=1.9}
			\input{figures/syn_segV_Imb_part_mu_lam1}\label{fig:syn_segV_Imb_part_mu_lam1}}
		\hfil
		\subfloat[{\bf Metric: $NCG_\text{max}-NCG_\text{min}$}]{\pgfplotsset{width=0.2\textwidth,height=0.19\textwidth,compat=1.9}
			\input{figures/syn_segV_Imb_part_unfair_lam1}\label{fig:syn_segV_Imb_part_unfair_lam1}}
		\hfil
		\subfloat[{\bf Metric: $NEC_\text{mean}$}]{\pgfplotsset{width=0.2\textwidth,height=0.19\textwidth,compat=1.9}
			\input{figures/syn_segV_Imb_spea_mu_lam1}\label{fig:syn_segV_Imb_spea_mu_lam1}}
		\hfil
		\subfloat[{\bf Metric: $NEC_\text{max}-NEC_\text{min}$}]{\pgfplotsset{width=0.2\textwidth,height=0.19\textwidth,compat=1.9}
			\input{figures/syn_segV_Imb_spea_unfair_lam1}\label{fig:syn_segV_Imb_spea_unfair_lam1}}
		\hfil
		\subfloat[{\bf Metric: $TEP$}]{\pgfplotsset{width=0.2\textwidth,height=0.19\textwidth,compat=1.9}
			\input{figures/syn_segV_Imb_TEP_lam1}\label{fig:syn_segV_Imb_TEP_lam1}}
		\hfil
		\subfloat[{\bf Metric: $NCG_\text{mean}$}]{\pgfplotsset{width=0.2\textwidth,height=0.19\textwidth,compat=1.9}
			\input{figures/syn_segV_Imb_part_mu_lam2}\label{fig:syn_segV_Imb_part_mu_lam2}}
		\hfil
		\subfloat[{\bf Metric: $NCG_\text{max}-NCG_\text{min}$}]{\pgfplotsset{width=0.2\textwidth,height=0.19\textwidth,compat=1.9}
			\input{figures/syn_segV_Imb_part_unfair_lam2}\label{fig:syn_segV_Imb_part_unfair_lam2}}
		\hfil
		\subfloat[{\bf Metric: $NEC_\text{mean}$}]{\pgfplotsset{width=0.2\textwidth,height=0.19\textwidth,compat=1.9}
			\input{figures/syn_segV_Imb_spea_mu_lam2}\label{fig:syn_segV_Imb_spea_mu_lam2}}
		\hfil
		\subfloat[{\bf Metric: $NEC_\text{max}-NEC_\text{min}$}]{\pgfplotsset{width=0.2\textwidth,height=0.19\textwidth,compat=1.9}
			\input{figures/syn_segV_Imb_spea_unfair_lam2}\label{fig:syn_segV_Imb_spea_unfair_lam2}}
		\hfil
		\subfloat[{\bf Metric: $TEP$}]{\pgfplotsset{width=0.2\textwidth,height=0.19\textwidth,compat=1.9}
			\input{figures/syn_segV_Imb_TEP_lam2}\label{fig:syn_segV_Imb_TEP_lam2}}
		\vfil
		\subfloat{\pgfplotsset{width=.7\textwidth,compat=1.9}
			\begin{tikzpicture}
			\begin{customlegend}[legend entries={{\bf SWM},{\bf PFair},{\bf SFair},{\bf IAM}, {\bf FairConf}},legend columns=5,legend style={/tikz/every even column/.append style={column sep=0.5cm}}]
			\addlegendimage{red,mark=.,ultra thick,sharp plot}
			\addlegendimage{teal,mark=.,ultra thick,sharp plot}
			\addlegendimage{orange,mark=.,ultra thick,sharp plot}
			\addlegendimage{dashed,black,mark=.,ultra thick,sharp plot}
			\addlegendimage{blue,mark=o,sharp plot}
			\end{customlegend}
			\end{tikzpicture}}
	}\caption{Results on data with imbalanced participant groups (segregated interests, identical availability). For the plots in first row, $\lambda_2$ is fixed at $0.5$, and $\lambda_1$ is varied. For the plots in second row, $\lambda_1$ is fixed at $0.5$, and $\lambda_2$ is varied.}\label{fig:syn_segV_Imb}
\end{figure*}

\subsubsection{\bf Balanced Participant Groups (Indentical Interests, Segregated Availability):}\label{subsubsec:syn_segA}
Here, we take a dataset with $\abs{\mathcal{P}}=10$, $\abs{\mathcal{T}}=10$, $\abs{\mathcal{S}}=15$.
While all the participants have identical interest scores same as $V^1$ in \cref{fig:syn_V}, the first $5$ participants have availability scores as $A^1$ in \cref{fig:syn_A}, and the next $5$ have $A^2$ in \cref{fig:syn_A}.
We plot the results for this dataset in \cref{fig:syn_segA}.
~\\{\bf Baseline Results:}
Both SWM and IAM yield same social welfare (\cref{fig:syn_segA_TEP_lam1}), as it also follows from \cref{lemma:V_or_A_same} under the special condition of identical interest scores of the participants;
however, they could result in different optimal-welfare schedules;
here also we see different schedules given by SWM and IAM (difference in \cref{fig:syn_segA_part_unfair_lam1}).
As all participants have identical interests but segregated availability for two equal sized groups, there is huge scope for bringing in participant fairness by balancing the talks in favorable slots of both participant groups;
thus, PFair brings a huge improvement in participant fairness (\cref{fig:syn_segA_part_unfair_lam1}) in comparison to SWM.
However, there is no such scope for improvement in speaker fairness which is why SFair achieves only a marginal improvement (\cref{fig:syn_segA_spea_unfair_lam1}).
~\\{\bf FairConf Results:}
Again FairConf with $\lambda_1=\lambda_2=0.5$ performs very good in all the metrics here too.
While FairConf shows improvement in participant fairness (\cref{fig:syn_segA_part_unfair_lam1}) with increase in $\lambda_1$, it does not improve speaker fairness (\cref{fig:syn_segA_spea_unfair_lam2}) with increase in $\lambda_2$ due to very less scope for improving speaker fairness in this case.
However, it is worth noting that, just because there is less scope for improving speaker fairness, we should not just remove it from joint optimization by setting $\lambda_2=0$;
Setting $\lambda_2=0$ could adversely impact speaker fairness (see $\lambda_2=0$ point in \cref{fig:syn_segA_spea_unfair_lam2}).
Setting $\lambda_2=0$ can give the joint optimization an opportunity to further improve participant fairness (see $\lambda_2=0$ point in \cref{fig:syn_segA_part_unfair_lam2}) at the cost of losing speaker fairness.
Thus, it is important to keep reasonable non-zero weight $\lambda_2$ for speaker fairness---even in absence of any scope for improvement---as it can work both as optimizer and defender/preserver of speaker fairness.
\subsubsection{\bf Imbalanced Participant Groups (Identical Interests, Segregated Availability):}\label{subsubsec:syn_segA_Imb}
We use dataset same as previous, but with just one change:
here the first $7$ participants have availability scores as $A^1$ in \cref{fig:syn_A}, and the next $3$ have $A^2$ in \cref{fig:syn_A}.
We plot the results for this dataset in \cref{fig:syn_segA_Imb}.
~\\{\bf Baseline Results:}
In comparision to the case in \cref{subsubsec:syn_segA}, here SWM achieves higher participant satisfaction (compare SWM in \cref{fig:syn_segA_part_mu_lam1} and \cref{fig:syn_segA_Imb_part_mu_lam1}) and higher participation unfairness (compare SWM in \cref{fig:syn_segA_part_unfair_lam1} and \cref{fig:syn_segA_Imb_part_unfair_lam1}) too.
This is because, SWM can just assign the high interest talks to favorable slots of the majority participant group in order to maximize social welfare.
Thus, there is huge scope for improvement in participant fairness as the participant groups have segregated availability too.
That's why PFair brings a huge improvement in participant fairness (\cref{fig:syn_segA_Imb_part_unfair_lam1}).
However, just like the case in \cref{subsubsec:syn_segA}, there is no such scope to improve speaker fairness, thus, we see almost no improvement with SFair (\cref{fig:syn_segA_Imb_spea_unfair_lam1}).
~\\{\bf FairConf Results:}
Here also FairConf causes significant improvements in participant fairness (\cref{fig:syn_segA_Imb_part_unfair_lam1}) with increase in $\lambda_1$.
However, as there is no scope to improve speaker fairness, we see no improvement in speaker fairness (\cref{fig:syn_segA_Imb_spea_unfair_lam2}) with increase in $\lambda_2$.
Here also we see, it is not wise to set $\lambda_2=0$ just because there is no scope to improve speaker fairness;
here also, setting $\lambda_2=0$ adversely impacts speaker fairness and satisfaction (see $\lambda_2=0$ point in \cref{fig:syn_segA_Imb_spea_mu_lam2,fig:syn_segA_Imb_spea_unfair_lam2}).
Looking at the FairConf results with $\lambda_1=\lambda_2=0.5$, we can say that FairConf has handled the case of imbalanced availability segregation very well, and has produced very good results across all metrics.
\subsubsection{\bf Balanced Participant Groups (Segregated Interests, Identical Availability):}\label{subsubsec:syn_segV}
Here, we take a dataset with $\abs{\mathcal{P}}=10$, $\abs{\mathcal{T}}=10$, $\abs{\mathcal{S}}=15$.
While all the participants have identical availability scores same as $A^1$ in \cref{fig:syn_A}, the first $5$ participants have interests scores as $V^1$ in \cref{fig:syn_V}, and the next $5$ have $V^2$ in \cref{fig:syn_V}.
We plot the results for this dataset in \cref{fig:syn_segV}.
%
%

As the participant groups have segregated interest scores for the talks and identical availability scores over all slots, one would expect SWM to show large unfairness in the results.
However, the slopes of the chosen interest score pattern and availability score pattern also play a role in this.
Looking at the slopes of $V^1$, $V^2$ \cref{fig:syn_V} (power law slope), one can easily see that they decrease significantly faster than $A^1$ in \cref{fig:syn_A} (Cosine slope).
Thus, two talks with same overall interest scores, can be assigned two consecutive slots without causing too high disparity in individual participant satisfactions and individual speaker satisfactions, as the difference between the overall availability scores of two consecutive slots is not too large.
This is why here, we see SWM not only optimizes social welfare but also achieves high participant and speaker fairness (\cref{fig:syn_segV})---close to FairConf.

It is also worth noting that even though SWM provides a solution with the best speaker fairness (\cref{fig:syn_segV_spea_unfair_lam1}), SFair gives a different solution with same speaker fairness but with significantly poorer performances in other metrics.
Similarly, PFair also gives a different solution which improves participant fairness by a very small amount (\cref{fig:syn_segV_part_unfair_lam1}), but causes significant losses in other metrics.
This happens because both PFair and SFair are agnostic to the other fairness objective and welfare objective.
Thus, they may or may not result in the same schedule as SWM even if it is optimal.
\subsubsection{\bf Imbalanced Participant Groups (Segregated Interests, Identical Availability):}\label{subsubsec:syn_segV_Imb}
We use dataset same as previous, but with just one change:
here the first $7$ participants have interest scores as $V^1$ in \cref{fig:syn_V}, and the next $3$ have $V^2$ in \cref{fig:syn_V}.
We plot the results for this dataset in \cref{fig:syn_segV_Imb}.
%
%

In contrast to the case in \cref{subsubsec:syn_segV}, here, there is an imbalance in the interest segregation.
Thus, it provides SWM an opportunity to be biased towards the majority participant group and improve social welfare.
Thus, we see a higher participant unfairness (\cref{fig:syn_segV_Imb_part_unfair_lam1}).
FairConf significantly improves participant fairness while causing marginal loss in social welfare (\cref{fig:syn_segV_Imb_TEP_lam1}).

\subsubsection{\bf Results Summary:}\label{subsubsec:results_summary}
While SWM maximizes social welfare, it often results in high participant unfairness and speaker unfairness.
On the other hand, naive approach IAM also optimizes social welfare in special conditions (\cref{lemma:V_or_A_same}), but due to random tie breaks both SWM and IAM may not differentiate between two optimal-welfare schedules in terms of fairness.
Moreover, in absence of any explicit fairness consideration, both SWM and IAM often perform poorly in term fairness.
While PFair achieves maximum participant fairness, it often becomes unfair to the speakers, and also causes loss in mean speaker satisfaction;
the opposite happens in case of SFair.
Our joint optimization approach FairConf with similar weights for participant and speaker fairness, i.e., $\lambda_1=\lambda_2=0.5$ is found to be performing very good across all the metrics in all the tested cases (achieves good participant and speaker fairness with only marginal losses in social welfare, and overall participant satisfaction and overall speaker satisfaction).
\section{Related Work}\label{sec:related}
We briefly review related works in the following two directions.
\subsubsection*{\bf Job and Network Scheduling:}
The most commonly studied scheduling problem in computing research is on job or network scheduling.
In this problem, there are multiple agents (e.g., system processes, computing jobs, data packets, networked users or machines) who have shared access to common resource(s) (e.g., fixed number of processors, limited internet bandwidth), and the agents raise requests for the use the common resource(s) from time to time;
now the goal is to allocate the resource(s) to the agents in fair and optimal manner.
For example: 
fair-share scheduling for system processes \cite{kay1988fair,li2009efficient,lozi2016linux};
scheduling of packet transfers to ensure fair sharing of network channels \cite{vaidya2005distributed};
fair scheduling of computing jobs on computing clusters \cite{isard2009quincy,mahajan2019themis};
fair scheduling for devices in shared wireless charging systems \cite{fang2018fair};
fair scheduling of retrieval queries for databases \cite{harris2015fair}.
Our problem setup for fair conference scheduling is of very different form than typical job scheduling setup. 
While conference scheduling has two types of stakeholders---participants and speakers---whose functions and fairness requirements are different from each other, job scheduling problems are usually modelled for one type of stakeholders---the agents who use the shared resource.
\subsubsection*{\bf Meeting/Event Scheduling:}
The problem which is closely related to our conference scheduling problem is meeting or event scheduling where there are mulltiple agents with different availability in different time intervals, and the goal is to find an optimal schedule for meeting(s).
These problems have been explored for different types of optimality;
for example: finalizing schedules with minimal negotiations with agents \cite{sen1998formal}; schedules with optimal availability of the agents \cite{garrido1996multi,capek2008event}.
To solve these problems, methods like distributed constraint optimization \cite{maheswaran2004taking} have been proposed.
Even though there is a similarity between meeting scheduling and our conference scheduling, in meeting scheduling problems, utility/satisfaction is modelled only for the agents attending the event(s), and there is no consideration of satisfaction from the side of the event (i.e., no concept of speakers as individuals with self interests).
In addition to that most of these works consider only binary availability of the agents, and are often focussed more on optimizing efficiency.
While there has been work on strategy-proof scheduling \cite{ephrati1994non} and privacy-aware scheduling \cite{freuder2001privacy,lee2016probabilistic}, fairness has not been considered in such settings (except for \citet{baum2014scheduling} dealing with a very different setting).
In constrast, we allow non-binary availability of the agents, and care for both efficiency and fairness in our conference scheduling problem.
\if 0
\begin{itemize}
	\item (done)Fair-share job scheduler \cite{kay1988fair}
	\item (done)Completely fair scheduler \cite{li2009efficient,lozi2016linux}
	\item (done)Fair scheduling in shared wireless network \cite{vaidya2005distributed}
	\item (done)Fair scheduling for distributed computing clusters \cite{isard2009quincy}
	\item (done)Fair scheduling in shared wireless charging \cite{fang2018fair}
	\item (done)Fair scheduling of gpu clusters \cite{mahajan2019themis}
	\item (done)Fair scheduling for queries in information retrieval systems \cite{harris2015fair}
\end{itemize}
limitations:
1> problems of very different format
2> fairness considerations only from the consumer side (i.e., the jobs/agents/users who get to use/consume the shared resource)
\begin{itemize}
	\item (Done)Meeting scheduling google patent \cite{capek2008event}
	\item (Done)DCOP for distributed event scheduling \cite{maheswaran2004taking}
	\item (Done)Distributed meeting scheduling \cite{sen1998formal,garrido1996multi}
	\item Fairness in scheduling appointments to allow fair incentivization of employees at hospitals \cite{baum2014scheduling}
	\item (Done)Non-manipulative meeting scheduling system (strategy-proof) \cite{ephrati1994non}
	\item (Done)Privacy issue in meeting scheduling \cite{freuder2001privacy}
	\item (Done)Meeting scheduling through minimal availability queries to the agents \cite{lee2016probabilistic}
\end{itemize}
limitations:
1> Focussed more on optimizing efficiency
2> Cared for binary availability of agents
3> No consideration of satisfaction from the side of meetings (no concept of speakers as individuals with self interests)
4> Very little work into fairness considerations (except for a few)
\fi
\section{Discussion}\label{sec:discussion}
In this work, we modeled a very timely and important problem of online conference scheduling with welfare and fairness concerns.
Apart from formally defining the fairness notions and objectives, we brought out several fundamental tensions among participant fairness, speaker fairness, and social welfare, 
%
and showed the benefits of our proposed joint optimization framework.
We believe that 
this work will lay the ground 
for further research (both theoretical and empirical) into such multi-stakeholder scheduling problems (going beyond the virtual conference scheduling).
Next, we present a set of open questions and possible future works.
\subsection*{Future Work}
(i) Even though, in this paper, we solved the proposed joint optimization problem by converting it into an integer linear program which works well for problems of small size, more research is needed in developing approaches which are more scalable and computationally efficient with provable guarantees. \\
%
(ii) We also limited ourselves with non-overlapping time slots; 
however, larger conferences often have overlapping and parallel sessions to accommodate their higher number of talks.
Accounting for such overlapping slots would require changes in the formulation of participant and speaker satisfactions as participants would have to choose which of the parallel session to attend, thereby also changing the expected 
audience in a particular talk. \\
(iii) We considered speaker satisfaction in terms of the expected crowd at their talks, however, in virtual conferences, the ease of availability of the speakers in the assigned time slot also plays a role in their satisfaction;
so one need to factor in  both expected crowd and ease of availability of the speaker to define an encompassing measure for speaker satisfaction. \\
(iv) Although we considered the participants and speakers to be separate agents in our model, a single agent could play the roles of both a participant and a speaker;
thus for such agents with dual roles, the participant satisfaction measure can be modified to exclude the time slot(s) in which they play the role of a speaker. \\
(v) Another important aspect which may be of great importance in conference scheduling is to consider 
 group fairness: that is to ensure fair satisfaction for timezone-specific groups of participants,
as well as domain-specific groups of speakers etc.
Group fairness notions form a long line of work in machine learning \cite{mehrabi2019survey}; these works can be explored and extended for this purpose.
\if 0
\gourab{\begin{itemize}
	\item Parallel/Overlapping Slots
	\item Repetitions (limited) of talks in different slots
	\item Accounting for speaker availability too
	\item Group fairness for participants (time-zone-specific groups)
	\item Group fairness for speakers (domain/area specific groups)
	\item Grouping of a number of talks in consecutive slots to maintain a sort of coherence 
	\item Accounting for agents with dual roles of participant and speaker
	\item (added)Theoretical exploration for efficient solutions 
\end{itemize}}
\fi

\noindent {\bf Acknowledgements:}
G. K Patro is supported by a fellowship from Tata Consultancy Services.
This research was supported in part by a European Research Council (ERC) Advanced Grant for the project ``Foundations for Fair Social Computing", funded under the European Unions Horizon 2020 Framework Programme (grant agreement no. 789373).

\balance

\bibliographystyle{ACM-Reference-Format}
\bibliography{Main}
	
\end{document}